\newtheorem{theorem}{Theorem}
\newtheorem{algorithm}[theorem]{Algorithm}
\newtheorem{definition}[theorem]{Definition}
\newtheorem{proposition}[theorem]{Proposition}
\newenvironment{proof}[1][Proof]{\textbf{#1.} }{\ \rule{0.5em}{0.5em}}
  \newcommand{\cut}[1]{}
  \newcommand{\vcut}[1]{}
\newcommand{\vnote}[1]{}
\newcommand{\pnote}[1]{}
\newcommand{\IF}{\mbox{   \textsc{if}   }}
\renewcommand{\phi}{\varphi}
\def\red#1{\colorbox{red}{#1}}
\def\blue#1{\colorbox{cyan}{#1}}
\def\purple#1{\colorbox{purple}{#1}}
\def\yellow#1{\colorbox{yellow}{#1}}
\begin{document}

\title{Endogenous games with goals: \\ side-payments among goal-directed artificial agents}

%Alternative::::: Endogenous Boolean Games: Side-payments and mechanisms among agents.

\author{Paolo Turrini}

\affil{Department of Computing, Imperial College London}

%\ead{p.turrini@imperial.ac.uk}

\maketitle

\begin{abstract} 
Artificial agents, of the kind studied in AI, are typically oriented to the realization of an externally assigned task and try to optimize over secondary aspects of plan execution such time lapse or power consumption, technically displaying a quasi-dichotomous preference relation. 
Boolean games have been developed as a paradigm for modelling societies of agents with this type of preference.
In boolean games agents exercise control over propositional variables and strive to achieve a goal formula whose realization might require the opponents' cooperation. Recently, a theory of {\em incentive engineering} for such games has been devised, where an external authority steers the outcome of the game towards certain {\em desirable} properties consistent with players' goals, by imposing a taxation mechanism on the players that makes the outcomes that do not comply with those properties less appealing to them. The present contribution stems from a complementary perspective and studies, instead, how games with quasi-dichotomous preferences can be transformed from inside, rather than from outside, by endowing players with the possibility of sacrificing a part of their payoff received at a certain outcome in order to convince other players to play a certain strategy. Concretely we explore the properties of {\em endogenous games with goals}, obtained coupling strategic games with goals, a generalization of boolean games, with the machinery of {\em endogenous games} coming from game theory. We analyze equilibria in those structures, showing the preconditions needed for desirable outcomes to be achieved without external intervention. Finally, making use of taxation mechanism as introduced in the literature, we show how to transform these structures in such a way that desirable outcomes can be realized even when side-payments are allowed. What our results show is that endogenous games with goals display specific irreducible features --- with respect to what already known for endogenous games --- which makes them worth studying in their own sake.

\end{abstract}

\maketitle
\section{Introduction}

A characteristic feature of agents, the kind of artificial entities studied in AI, is that of being goal-oriented \cite{GPS, russell-norvig, cristiano, conte-castelfranchi}, directed to the realization of certain desirable states of affairs, typically externally assigned, disregarding secondary factors, such as amount of resources spent or monetary rewards received, whenever they hinder a goal state to obtain. As a consequence of this, an agent does not automatically respond to incentives, as an economic actor would, but displays a typical quasi-dichotomous preference relation over possible outcomes, i.e., it compares states looking at the realization of their own goal first, and only then at efficiency levels.

A well-known model of a society of goal-directed agents heavily investigated in AI is boolean games  \cite{paul}, a compact and computationally desirable representation of strategic interaction by means of logical formulas. In boolean games agents exercise control over propositional variables and strive to achieve a goal formula whose realization might require the opponents' cooperation, disregarding the cost associated to each action, if need be.

 Recently a theory of {\em incentive engineering} has been devised \cite{mike}, where an external authority, i.e., the principal, steers the outcome of the game towards certain {\em desirable} properties, by imposing a taxation mechanism on the players  \footnote{The term {\em player}, mutuated from game theory, and the term {\em agent}, mutuated from AI, will be used interchangeably.} that makes the outcomes that do not comply with those properties less appealing to them. However, that of a principal turns out to be a non-trivial task as, due to preference quasi-dichotomy, there is no monetary compensation that can convince agents to give up their goal. In all the other cases, though, players behave as cost-minimizers and desirable systemic properties satisfying players' goals have been shown to be implementable by the appropriate system of incentives. %The latest accounts of incentive design for boolean games \cite{mike, ulle} establish complexity results for the problem of weakly and strongly implementing systemic policies, paying particular attention to the minimal incentives that the system needs to provide in order to lead players to the realization of the desirable property.

The present contribution \footnote{This paper generalizes and significantly extends \cite{EBG}.} stems from a complementary perspective and studies, instead, how games with quasi-dichotomous preferences can be transformed from inside, rather than from outside, by endowing players with the possibility of sacrificing a part of their payoff received at a certain outcome in order to convince other players to play a certain strategy. Concretely we explore the properties of {\em endogenous games with goals}, obtained coupling strategic games with goals, a generalization of boolean games \cite{paul, mike}, with the machinery of {\em endogenous games} coming from game theory \cite{JW05}. We analyze equilibria in these novel structures, showing the preconditions needed for desirable outcomes to be achieved without external intervention .  We illustrate our idea and, informally, our setting in the following example.

\paragraph{Motivating example}
Consider two players, $a$ and $b$, which can decide whether two light switches $s_a$ and $s_b$ are {\em on} or {\em off}. Let us assume player $a$ to be in full control of $s_a$, player $b$ of $s_b$, and that each player is unaware of the other player's final decision. Let us also assume that players have goals and actions have costs, in particular that
$a$ wants $s_b$ to be {\em on}, that $b$ wants both $s_a$ and $s_b$ to be {\em on}, and that the cost of turning $s_a$ {\em on} is $5$, of turning it {\em off} is $4$, while the cost of turning $s_b$ either {\em on} or {\em off} is $2$. Finally, we assume that players always prefer to minimize the cost of the actions they take and that they always prefer outcomes satisfying their goal to outcomes that do not.
It goes without saying that in a scenario of this kind player $a$, being indifferent between having $s_a$ {\em on} or {\em off}, will simply look at the resulting costs and decide to turn the switch {\em off}. Likewise, player $b$ will turn  the switch {\em on} and this fact will not impose on him any extra cost. In the end, player $a$ will have her goal satisfied, paying a cost of $4$, while $b$ will not, paying $2$.

Suppose though that, before the game starts, players can commit to bear a part of the cost the opponent is incurring in at a certain outcome, should that outcome be reached.
While in the previous scenario player $b$, although {\em depending} on the other player for the realization of his goal (cfr. \cite{elise} for a qualitative account of dependencies in boolean games and \cite{grossi-turrini-dep} for strategic games), could not have any say on $a$'s decision-making, now he can  adopt a richer strategy and offer $a$ to bear some cost of $s_a$ being {\em on}, say 3, before $a$ takes any decision. In the resulting situation both $a$ and $b$ will have their goal satisfied, $a$ bearing a cost of $2$ while $b$ of $5$, which is a more satisfactory solution for both players. Notice, however, that the solution is not stable, as player $b$ has an incentive to deviate to more parsimonious offers in the pre-play phase without compromising the realization of his own goal.

\medskip

The added value of the analysis presented here, intuitively introduced in the example above, is two-fold:
\begin{itemize}

\item it complements the framework of incentive engineering for boolean games \cite{mike}, studying those situations in which players can reach desirable properties {\em without} external intervention;
\item it provides a quantitative resolution to dependence relations, broadly studied for the case of boolean games \cite{elise}\cite{elise2}, allowing players to influence each other's decision-making by the offer of monetary incentive. %While in the case of boolean games endowed with taxation mechanisms players are not able to influence each other's decision making, allowing side payments makes this possibility real.

\end{itemize}

We carry out the analysis studying the general setting of endogenous games with goals, in relation with both boolean games and endogenous games, focussing on the properties of the resulting equilibria.

\paragraph{Paper Structure} 

In Section \ref{sec:quasi} we introduce strategic games with goals, studying their formal connection with related contributions in the game theory and AI literature, i.e., strategic (normal form) games and boolean games. In Section \ref{sec:endo} we study endogenous games with goals, adding to strategic games with goals the dynamics brought into play by the possibility of exchanging side-payments in the pre-play phase. In Section \ref{sec:equilibrium} we carry out an equilibrium analysis of these structures, showing results of pure strategy equilibrium survival and discussing the connection with what known from game theory. In Section \ref{sec:integration} we integrate side-payments with taxation mechanisms, devising a procedure that ensures desirable properties to be reached. Section \ref{sec:lexico} studies a variant of quasi-dichotomous preferences with a lexicographic order over mixed strategies, discussing the connection with the standard utility function of normal form games. Finally, in Section \ref{sec:conclusion} we wrap up the work pointing to possible future research directions.

\section{Strategic games and quasi-dichotomous preferences}\label{sec:quasi}

In this section we describe a general approach to characterizing goal-directed artificial agents acting in a common shared world. We do so by explicitly enriching a strategic game with a distinguished set of goals, one for each agent. 

As well-known, a {strategic (normal form) game} $\mathcal{S}$ is a tuple $(N, \{\Sigma_i\}_{i \in N}, \pi)$, where $N$ is a set of players, $\Sigma_i$ a set of strategies for player $i$ and $\pi: \prod_{i \in N} \Sigma_i \times N \to \mathbb{R}$ a {payoff function}, assigning to each player his payoff at each strategy profile. Henceforth we abbreviate $\pi(\sigma,i)$ as $\pi_i(\sigma)$, $\prod_{i \in N} \Sigma_i$ as $\Sigma$, --- extending the conventions to similar cases ---and denote $NE(\mathcal{S})$ the set of pure strategy Nash equilibria of strategic game $\mathcal{S}$, with $NE^{\Delta}(\mathcal{S})$ being its mixed extension.

Strategic games with goals are defined as follows.

\begin{definition}[Strategic games with goals]

A {\em strategic game with goals} is a tuple $(\mathcal{S},\{G_i\}_{i\in N})$ where $\mathcal{S}=(N, \{\Sigma_i\}_{i \in N}, \pi)$ is a strategic game and each $G_i \subseteq \Sigma$ is a set of goal states for player $i$.

\end{definition}

Intuitively a {\em goal} is a state of the game the agent is directed to and, when given the possibility, would not want to trade for a non-goal state, no matter what the payoff assigned by the function $\pi$ is.

If we think of each agent as associated to a colour, we obtain a particularly intuitive representation of strategic games with goals, where goal states for each agent are assigned the agent's colour. The algebra of goal states is defined by operation on colours, e.g. a goal state shared by a blue player and red player is represented as a purple state. Figure \ref{PD} is an example of how strategic games with goals can be displayed.

\begin{figure}[htb]\hspace*{\fill}%
\begin{game}{2}{2}
     \> $L$   \> $R$\\
$U$ \> {${3,3}$}\>{${0,5}$}\\
$D$   \> \blue{$5,0$}   \>\purple{$1,1$}
\end{game}\hspace*{\fill}%
\caption{Players' goals. \textcolor{cyan}{Column} wants the game to end up in the set $\{(L,D),(R,D)\}$, \textcolor{red}{Row} in the set $\{(R,D)\}$. The coalition $\{\textcolor{purple}{Column}, \textcolor{purple}{Row}\}$ wants the shared outcome $(R,D)$ to be realized. As a convention, \textcolor{red}{Row} obtains the first component in the displayed payoff vectors, \textcolor{cyan}{Column} the second.}
\label{PD}
\end{figure}

\subsection{Quasi-dichotomous preferences}

We stress it once more: goal states represent those outcomes that are of utter importance for an agent. In particular, when confronted between the choice of a goal state and a non-goal state, the agent will always elect to choose the goal state, even if this amounts to giving up secondary rewards, expressed by the payoff function. When goal realization is however not an issue, secondary aspects play a role and the agent will always try to maximize the resulting payoff.

This fact induces what is technically called {\em quasi-dichotomy} of a preference relation \cite{mike}: there is a distinguished set of states that is better for an agent than all the others (the goal states) {\em and will remain so}, independently of money transfers. Both goal states and non-goal states can in any case be ordered looking at the payoff that each player is associated to.

Clearly, preference quasi-dichotomy could be implemented in many ways. The first that comes to mind is that of working with truly lexicographic preferences \cite{Rubinstein}, i.e., states being identified with a tuple $(x,n)$ where $x\in \{0,1\}$ and $n \in \mathbb{R}$, the first entry encoding whether the state is a goal state or not and the second entry encoding secondary materialistic aspects. Therefore, a state $z$ is to be preferred to a state $z^{\prime}$ whenever $z \geq^{LEX} z^{\prime}$, where $\geq^{LEX}$ is the lexicographic order between the two.
Under this interpretation, goal states are {\em de facto} assigned an infinite payoff, which makes them better than a non-goal state no matter what the payoff of the latter is.

It is well-known that both games with lexicographic preferences and games with infinite utility do not have a corresponding von Neumann-Morgenstern utility representation \cite{Wilson, Vicky, Brandeburger, Rubinstein} and the resulting games do not in general display fundamental game-theoretical properties, such as existence of Nash equilibria. Furthermore, games with infinite utility rule out the possibility of having standard expected utility to begin with and games with lexicographic preferences allow only for non-standard versions thereof. Section \ref{sec:lexico} will carry out an in-depth analysis of these non-standard ways of dealing with quasi-dichotomy, showing how even natural representations of mixed strategies cannot be analyzed within the framework of normal form games, preventing therefore a comparison with well-known setups coming from game theory, e.g. \cite{JW05}.

To overcome these problems we study a generalization of boolean games with costs (as introduced in \cite{mike} and analyzed further in \cite{EBG}) which are instead game-representable. Concretely, we define a family of {\em boost factors} $\Omega_i$, one for each player $i$, each of which encodes how much more a player values a certain goal state with respect to a non-goal state. In other words, a boost factor is a measure of the relative distance that, at each game, a certain goal state for an agent finds itself with respect to all other non-goal states.
Notice that boost factors are mechanisms to ensure that goal states {\em remain} better than non-goal states, but they implicitly also give a measure of the {\em risk} that a player is willing to undertake to achieve a goal state, i.e., they encode a preference relations over mixed profiles containing both goal states and non-goal states.

Technically, for a given strategic game with goals  $(\mathcal{S},\{G_i\}_{i\in N})$, a {\bf boost factor} is a function $\omega^{(\mathcal{S},\{G_i\}_{i\in N})}_i: \mathbb{R} \to \mathbb{R}$ associating to each payoff how much this payoff is {\em boosted} if it is the payoff of a goal state. Properties required by boost factors are the following.

 Let $x,y \in \mathbb{R}$ and $(\mathcal{S},\{G_i\}_{i\in N})$ be a strategic game with goals.
 
 Each {$\omega^{(\mathcal{S},\{G_i\}_{i\in N})}_i: \mathbb{R} \to \mathbb{R}$} is required to be such that:

\begin{equation}
\omega^{(\mathcal{S},\{G_i\}_{i\in N})}_i(x) \geq \omega^{(\mathcal{S},\{G_i\}_{i\in N})}_i(y) \mbox{ if and only if } x \geq y
\end{equation}
Intuitively, if a state $x$ is weakly preferred by $i$ to a state $y$ for its secondary aspect then it remains so whenever both $x$ and $y$ satisfy $i$'s goal.

\begin{equation}
\omega^{(\mathcal{S},\{G_i\}_{i\in N})}_i(x) > \pi_i(\sigma) \mbox{ for all } \sigma \not\in G_i
\end{equation}

Intuitively, satisfying a goal is always better than not satisfying it.

A strategic game with goals $(\mathcal{S},\{G_i\}_{i\in N})$ associated to profile of boost factors $\omega \in \Omega = \prod_{i\in N} \Omega_i$ is said to be {\bf instantiated by} $\omega$, and this is denoted $(\mathcal{S},\{G_i\}_{i\in N})(\omega)$. Intuitively, when playing $(\mathcal{S},\{G_i\}_{i\in N})(\omega)$ each agent $i$ judges the betterness of goal states according to the boost factor $\omega_i$. Here is the definition of utility taking them into account.

\begin{definition}[Utility]\label{def:utilities-boost} Let $(\mathcal{S},\{G_i\}_{i\in N})(\omega)$ be a strategic game with goals instantiated by a profile of boost factors $\omega$ with $\mathcal{S} = (N, \Sigma, \pi)$. The {\bf utility function} $u^{(\mathcal{S},\{G_i\}_{i\in N})(\omega)}: N \times \Sigma \to \mathbb{R}$ assigning to each player the payoff $u^{(\mathcal{S},\{G_i\}_{i\in N})(\omega)}_i(\sigma)$ he receives at outcome $\sigma$ is defined as follows.

$
u^{(\mathcal{S},\{G_i\}_{i\in N})(\omega)}_i(\sigma)= \left\{
\begin{array}{ll}
\omega(\pi_i(\sigma))& \IF \sigma \in G_i\\
\pi_i(\sigma) & \mbox{otherwise}
\end{array}
\right.
$
\end{definition}

So, the function $u$ is constructed by the combination of $\pi$ and $\omega$, i.e., the payoff function and the profile of boost factor, respectively. The latter takes care of the fact that goal states are always better than non-goal states, no matter what the payoff is associated to the latter by the function $\pi$. 

We would like at this point to clarify the possible conceptual ambiguity that might arise from having two different functions, $\pi$ and $u$, which associate a vector of numerical values to each outcome. The function $\pi$, which we will always refer to as a {\em payoff function}, encodes the secondary, intuitively purely monetary, aspects of a certain state. The function $u$ instead, which we will always refer to as a {\em utility function}, incorporates the primary aspects, i.e., goal realization, possibly associated to a state. Thereby, a state might have (relatively) high utility and (relatively) low payoff, if for instance the state satisfies a goal, but it might also have (relatively) low utility and (relatively) high payoff, if for instance it is the only state not satisfying a goal.

It is also worth noticing that boost factors allow us to reason about hypothetical utility distributions (all the real numbers that are not occurring as payoffs in the game). This extremely important feature will be fully exploited later on, as the role of boost factors is not only to declare that goal states are better, but, again, to keep them so independently of any monetary compensation.

For a given strategic game with goals $(\mathcal{S},\{G_i\}_{i\in N})$, with $\mathcal{S} = (N,\Sigma,\pi)$, and instantiated with boost factor profile $\omega$, the {\bf induced strategic game} is the game $\mathcal{S^{\prime}} = (N,\Sigma,u)$, where $u$ is calculated according to Definition \ref{def:utilities-boost}.

Figure \ref{trans} shows how.

\begin{figure}[htb]\hspace*{\fill}%
\begin{game}{2}{2}
    \> $L$  \> $R$\\
$U$ \> {${-3,-3}$}\>$0,-5$\\
$D$   \> \blue{$-5,0$}   \>\purple{$-1,-1$}
\end{game}\hspace*{\fill}%
\hspace{2mm} 
$\pi_i(\sigma) \Rightarrow{u_i(\sigma)}$
\hspace{2mm} 
\begin{game}{2}{2}
     \> $L$   \> ${R}$\\
$U$ \> {${-3,-3}$}\>$0,-5$\\
$D$   \> {$-5,1$}   \>{$3,0$}
\end{game}\hspace*{\fill}%
\caption{From a strategic game with goals to its induced strategic game: each agent's boost factor assigns $+3$ to all his goal states with respect to his best non-goal state, maintaining the relative distance among goal states. For example, the reason why \textcolor{cyan}{Column} is getting $1$ at outcome $(D,L)$ of the induced strategic game is because he is getting $0$ at outcome $(D,R)$ --- which is in turn because $-3$ is the payoff of his best non-goal state and $-3+3=0$ --- and the original relative distance between $(D,R)$ and $(D,L)$ is of $1$.}
\label{trans}
\end{figure}

\subsubsection{Expected Utility}

We introduce an extra novel feature with respect to the standard treatment of boolean games: we allow players to randomize over possible strategies. This will make it possible to draw a comparison with the equilibrium existence results known for endogenous games, which rely on Nash's theorem, the well-known result on the existence of Nash equilibria with mixed strategies in normal form games \cite{OR, nash}. %We will however discuss procedures needed to check whether equilibria in pure strategis exist in given boolean games augmented with transfer functions. 
To compute expected utility we first denote $\Delta(\Sigma_i)$ the set of probability distributions over the strategies of player $i$ and, for $\delta \in \Delta(\Sigma_i)$, we denote $\delta(\sigma_i)$ the probability that  {\bf mixed strategy profile}  $\delta$ assigns to $\sigma_i \in \Sigma$. We call the set $s(\delta)=\{\sigma_i \in \Sigma_i \mid \delta(\sigma_i) > 0\}$ the {\bf support} of $\delta$.

\begin{definition}[Expected Utility]

Let $\delta$ be a mixed strategy profile available at strategic game $\mathcal{S}$ with a payoff function $\pi$, $\sigma \in s(\delta)$ a pure strategy profile in the support of $\delta$, and $\delta(\sigma)$ the probability of $\sigma$ to occur according to $\delta$. The {\bf expected utility} of $\delta$ for player $i$ is defined as follows.

$$E_i(\delta)=\sum_{\sigma\in s(\delta)} \pi_i(\sigma) \delta(\sigma)$$

\end{definition}

To compute the expected utility on a strategic game with goals, instantiated with a boost factor, we compute the expected utility in its induced strategic game.

\subsubsection{Boolean games}

An instance of strategic games with goals are boolean games.

\begin{definition}[Boolean Games] A {\bf boolean game }is a tuple  $$(N, \Phi, c, \{\gamma_i\}_{i\in N}, \{\Phi_i\}_{i\in N})$$ where: 

\begin{itemize}

\item $N$ is a finite set of players; % We indicate individual players as $i,j,k, \ldots$;
\item $\Phi$ a finite set of propositional atoms; % We indicate individual atoms as $p,q,r,\ldots$;
$c:N \times V \rightarrow \mathbb{R}_+$ is a {\bf cost function}, associating to each player the cost he incurs in when some valuation $v \in V$ of the atoms $\Phi$ obtains;

\item $\gamma_i$ is a boolean formula, constructed on the set $\Phi$, denoting the {\bf goal} of player $i$;
\item $\Phi_i \subseteq \Phi$ is the nonempty set of atoms controlled by player $i$. As standard \cite{wiebePC}, we assume that for $j \neq i$, $\Phi_i \cap \Phi_j = \emptyset$ and that $\bigcup \{\Phi_i \mid i \in N\} = \Phi$, i.e.,  controlled atoms partition the whole space.

\end{itemize}

\end{definition}

%We denote ${\bf B}_N$ the class of all boolean games defined on a set of players $N$.
\medskip
A {\bf choice} of player $i$ is a function $v_i: \Phi_i \to \{{\bf tt}, {\bf ff}\}$, representing player $i$'s decision to set the atoms he controls to either true or false . We denote $V_i$ as the set of possible choices of player $i$. An {\bf outcome} $v \in \prod_{i \in N} V_i$ of the boolean game ${\mathcal{B}}$ is a collection of choices, one per player. An outcome induces a valuation function, assigning a boolean value to each propositional atom \cite{wiebePC}. Therefore, we denote $V=\prod_{i \in N} V_i$ the set of all possible valuation functions and, for $v\in V$ and $\phi$ being a boolean formula constructed on $\Phi$, we write $v \models \phi$ (resp. v $\not\models \phi$) to say that $\phi$ holds (resp. does not hold) under the valuation $v$. When a formula $\phi$ is satisfied by a unique valuation, we use $v_\phi$ to denote that valuation.

The utility in boolean games is calculated using a {boost factor} $\mu_i$, with $\mu_i= max_{v \in V} ( c_i(v))$, selecting the payoff of the worst outcome that can happen to player $i$, i.e., the updated valuation that is most costly to him, and adding it to the goal states, together with a sufficiently small real $\epsilon$.

$
u^{\beta(\mathcal{B})}_i(v)= \left\{
\begin{array}{ll}
\epsilon + \mu_i  - c_i(v)& \IF v \models \gamma_i\\
- c_i(v) & \mbox{otherwise}
\end{array}
\right.
$

Figure \ref{trans-bool} shows an example of this translation.

\begin{figure}[htb]\hspace*{\fill}%
\begin{game}{2}{2}
     \> $s_{C}$   \> $\neg s_{C}$\\
$s_{R}$ \> {${3,3}$}\>$0,5$\\
$\neg s_{R}$   \> \blue{$5,0$}   \>\purple{$1,1$}
\end{game}\hspace*{\fill}%
\hspace{2mm} 
$c_i(v) \Rightarrow{u_i(v)}$
\hspace{2mm} 
\begin{game}{2}{2}
     \> $s_{C}$   \> $\neg s_{C}$\\
$s_{R}$ \> {${-3,-3}$}\>$0,-5$\\
$\neg s_{R}$   \> {$-5,6$}   \>{$5,5$}
\end{game}\hspace*{\fill}%
\caption{From a boolean game to its induced strategic game. Each player $i$ is endowed with boost factor $\mu_i$.}
\label{trans-bool}
\end{figure}

When constructing a strategic game starting from a given boolean game we soon realize the downside of the two main restrictions --- otherwise extremely desirable from a computational point of view --- differentiating boolean games from the larger class of strategic game with goals:

\begin{itemize}

\item The number of strategies agents can play are always, $2^{n}$ for some natural number $n$, which can vary for each agent;

\item The relative distance between goal states and non-goal states is given by the $\mu+1$ boost factor, which is fixed for each agent.

\end{itemize}

The first constraint is not particularly restrictive, but it shows an intuitive difference between boolean games and strategic games with goals: while in boolean games players fully control propositional variables --- i.e., they can always decide whether to set them to true or false --- strategic games with goals can be thought of as a sort of boolean games with {\em admissible} valuation functions, and thereby more general structures. 
The second constraint is somewhat more restrictive, as the choice of the $\mu+1$ boost factor is extremely committal and bears a number of consequences especially if we take utility to be transferable, as we do in our framework. $\mu+1$ is a {\em regret-based} (or even {\em pride-based}) boost factor: the worse non-goal states are for an agent, the higher the payoff at his goal states; the factor is independent of the actual numerical value assigned by the cost function, i.e., it does not vary along with the absolute values of its domain; what is more, it is fixed for all players, i.e., all players apply exactly the same distance to separate goal states and non-goal states.

These are among the reasons why we think that the more general approach allowed by strategic games with goals is in order, which incorporates the important features of the $\mu+1$ class but leaves also space for more variety.

\subsection{Representation results}

The following results establish correspondences between strategic games, strategic games with goals and boolean games.

\begin{proposition}[Strategic games with goals and strategic games]\label{prop:static1}

\textcolor{white}{s}
\begin{enumerate}

\item Let $(\mathcal{S},\{G_i\}_{i\in N})$ with $\mathcal{S} = (N, \{\Sigma_i\}_{i \in N}, \pi)$ be a strategic game with goals and $\omega$ a profile of boost factors. There exists a strategic game $\mathcal{S}^{\prime} = (N, \{\Sigma_i\}_{i \in N}, \pi^{\prime})$  such that, for each $\sigma \in\prod_{i\in N} \Sigma_i$ and each $i \in N$, we have that  $u^{(\mathcal{S},\{G_i\}_{i\in N})(\omega)}_i(\sigma) = \pi^{\prime}(\sigma)$.

\item Let $\mathcal{S}= (N, \{\Sigma_i\}_{i \in N}, \pi)$ be a strategic game. Then there exists a strategic game with goals $(\mathcal{S}^{\prime},\{G\}_i)$ with $\mathcal{S}^{\prime} = (N, \{\Sigma_i\}_{i \in N}, \pi^{\prime})$ such that, for all profiles of boost factors $\omega$, for each $\sigma \in\prod_{i\in N} \Sigma_i$ and each $i \in N$,  $u^{(\mathcal{S},\{G_i\}_{i\in N})(\omega)}_i(\sigma) = \pi^{\prime}(\sigma)$.

\end{enumerate}

\end{proposition}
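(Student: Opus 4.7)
The plan is to treat the two parts separately, as each is essentially a direct construction showing that strategic games with goals generalize strategic games and are themselves representable as strategic games once a boost factor is fixed.

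For part (1), I would simply let the target payoff function $\pi^{\prime}$ be the utility function $u$ itself, that is, set $\pi^{\prime}_i(\sigma) := u^{(\mathcal{S},\{G_i\}_{i\in N})(\omega)}_i(\sigma)$ for every $i\in N$ and every profile $\sigma \in \prod_{i\in N}\Sigma_i$. Since $u$ is by Definition \ref{def:utilities-boost} a well-defined real-valued function on $N\times \Sigma$ (the boost factor $\omega_i$ returns a real, and $\pi_i(\sigma)$ is a real otherwise), the tuple $\mathcal{S}^{\prime} = (N,\{\Sigma_i\}_{i\in N},\pi^{\prime})$ is a bona fide strategic game by the very definition recalled at the beginning of the section, and the required identity holds by construction. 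No deeper argument is needed beyond checking that nothing in the definition of a strategic game constrains the payoff function further than being real-valued.

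For part (2), the strategy is to make the utility independent of the boost factor profile by choosing the family of goal sets so that the first branch of the definition of $u$ is never triggered. Concretely, I would set $\mathcal{S}^{\prime} := \mathcal{S}$ (hence $\pi^{\prime} := \pi$) and $G_i := \emptyset$ for every $i \in N$. Then for every $\sigma \in \Sigma$ and every $i \in N$ we have $\sigma \notin G_i$, so Definition \ref{def:utilities-boost} yields $u^{(\mathcal{S}^{\prime},\{G_i\})(\omega)}_i(\sigma) = \pi_i(\sigma) = \pi^{\prime}_i(\sigma)$, uniformly in $\omega$.

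The only substantive point to verify, and the one I expect will require a brief aside, is that admitting empty goal sets is compatible with the conditions imposed on boost factors. Condition (1) is vacuous in the monotonicity direction, and condition (2) becomes the requirement that $\omega_i(x) > \pi_i(\sigma)$ for every $\sigma \in \Sigma$; since the game is finite (or at least bounded), such a function $\omega_i$ exists — for example a sufficiently large constant shift of the identity. I would state this briefly to close any possible doubt that the construction is well-posed, and that would complete the proposition.
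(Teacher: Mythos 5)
Your proposal is correct and takes essentially the same route as the paper: part (1) by reading off $\pi'$ as the induced utility function, and part (2) by taking $\mathcal{S}'=\mathcal{S}$ with $G_i=\emptyset$ for all $i$ so that $u$ collapses to $\pi$ regardless of $\omega$. Your closing aside on the well-posedness of boost factors under empty goal sets is a minor extra check the paper omits, but it does not change the argument.
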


\begin{proof}

For the first item, start out with a strategic game with goals $(\mathcal{S},\{G_i\}_{i\in N})$, with $\mathcal{S} = (N, \{\Sigma_i\}_{i \in N}, \pi)$, and a boost factor profile $\omega$. The strategic game $\mathcal{S} = (N, \{\Sigma_i\}_{i \in N}, \pi^{\prime})$ with $u^{(\mathcal{S},\{G_i\}_{i\in N})(\omega)}_i(\sigma) = \pi^{\prime}(\sigma)$ for each $\sigma \in\prod_{i\in N} \Sigma_i$ and each $i \in N$ is immediate to construct.

For the second item, consider the strategic game  $\mathcal{S}= (N, \{\Sigma_i\}_{i \in N}, \pi)$ and construct the strategic game with goals $(\mathcal{S}^{\prime},\{G\}_i)$ with $\mathcal{S}^{\prime} = \mathcal{S}$ and for each $i \in N$, set $G_i = \emptyset$. It follows  that for all profiles of boost factors $\omega$ we have that for each $\sigma \in\prod_{i\in N} \Sigma_i$ and each $i \in N$,  $u^{(\mathcal{S},\{G_i\}_{i\in N})(\omega)}_i(\sigma) = \pi^{\prime}(\sigma)$.
\end{proof}

\begin{proposition}[Strategic games and boolean games]\label{prop:static2}

\textcolor{white}{s}
\begin{enumerate}
\item Let $\mathcal{B} = (N, \Phi, c, \{\gamma_i\}_{i\in N}, \{\Phi_i\}_{i\in N}) $ be a boolean game. Then there exist a strategic game $\mathcal{S} = (N, \{\Sigma_i\}_{i \in N}, \pi)$ and a bijective function $f:V \to \prod_{i \in N} \Sigma_i$ such that for all  $i, u^{\mathcal{B}}_i(v) = \pi_i(f(v))$.

\item Let  $\mathcal{S} = (N, \{\Sigma_i\}_{i \in N}, \pi)$ be a strategic game such that, for each $i\in N$ there is $n \in \mathbb{N} \setminus \{0\}$ with $|\Sigma_i|=2^{n}$ . Then there exist a boolean game $\mathcal{B} = (N, \Phi, c, \{\gamma_i\}_{i\in N}, \{\Phi_i\}_{i\in N}) $, a $k \in \mathbb{N}$, and a bijective function $f:V \to \prod_{i \in N} \Sigma_i$ such that for all  $i, u^{\mathcal{B}}_i(v) = \pi_i(f(v)) - k$.
\end{enumerate}
\end{proposition}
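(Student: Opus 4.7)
The two items are rather different in flavor: the first is a straightforward repackaging, while the second requires a small trick to reconcile the sign conventions between the two formalisms.

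\textbf{Item (1).} Starting from $\mathcal{B} = (N, \Phi, c, \{\gamma_i\}_{i\in N}, \{\Phi_i\}_{i\in N})$, the plan is to take $\Sigma_i := V_i$ for each $i \in N$ (so a strategy for $i$ is literally a choice $v_i \in V_i$), let $f : V \to \prod_{i \in N} \Sigma_i$ be the identity under this identification, and define $\pi_i(f(v)) := u^{\mathcal{B}}_i(v)$. Since the controlled atoms partition $\Phi$, the product $\prod_i V_i$ is in bijection with $V$, so $f$ is well-defined and bijective, and the required equation holds by definition.

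\textbf{Item (2).} Here I would first allocate, for each player $i$, a fresh set $\Phi_i = \{p^i_1,\ldots,p^i_{n_i}\}$ of $n_i$ propositional atoms, with the $\Phi_i$'s pairwise disjoint, and set $\Phi := \bigcup_{i \in N} \Phi_i$. Because $n_i \geq 1$, each $\Phi_i$ is nonempty, as required, and $|V_i| = 2^{n_i} = |\Sigma_i|$, so I can fix any bijection $f_i : V_i \to \Sigma_i$ (say, by enumerating both sets in binary) and let $f := \prod_i f_i$. The key design choice is to make every goal unsatisfiable, by taking $\gamma_i := p^i_1 \wedge \neg p^i_1$; this forces $v \not\models \gamma_i$ for all $v$ and all $i$, so the boolean-game utility reduces uniformly to $u^{\mathcal{B}}_i(v) = -c_i(v)$.

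The remaining step is to pick $k \in \mathbb{N}$ and define the cost function so that $-c_i(v) = \pi_i(f(v)) - k$, i.e., $c_i(v) := k - \pi_i(f(v))$. The main (and only) obstacle is the sign constraint $c : N \times V \to \mathbb{R}_+$: the raw payoffs $\pi_i$ may be arbitrary reals, whereas costs must be non-negative. The shift by $k$ exists precisely to absorb this discrepancy. I would therefore choose any $k \in \mathbb{N}$ with $k \geq \max_{i \in N,\, \sigma \in \Sigma} \pi_i(\sigma)$, which exists since $\Sigma$ is finite and $\mathbb{N}$ is unbounded; with this choice $c_i(v) \geq 0$ for every $v$ and $i$. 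Substituting into the goal-free branch of the boolean-game utility yields $u^{\mathcal{B}}_i(v) = -c_i(v) = \pi_i(f(v)) - k$ for all $i$ and $v$, completing the construction.
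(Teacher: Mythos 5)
Your proposal is correct and follows essentially the same route as the paper's own (much terser) proof: identify strategies with valuations via a bijection for item (1), and for item (2) make every goal unsatisfiable via $p \wedge \neg p$ and absorb the sign constraint on costs by choosing $k$ at least as large as the maximum payoff. The only difference is that you spell out explicitly why $k$ must dominate $\max_{i,\sigma}\pi_i(\sigma)$ (to keep $c$ non-negative), which the paper leaves implicit in the phrase ``sufficiently large $k$.''
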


\begin{proof}

For the first item, simply construct a strategic game such that $f$ is a bijection and then set each $\pi_i(f(v))$ to return the value given by $u^{\mathcal{B}}_i(v)$. For the second one, pick again $f$ to be bijection, and, for a sufficiently large $k\in \mathbb{N}$, set each $\gamma_i$ to $\bot$ --- i.e., $p\wedge \neg p$ for some $p \in \Phi_i$ --- and each $u^{\mathcal{B}}_i(v)$ to $\pi_i(f(v))-k$. \end{proof}

\begin{proposition}[Strategic games with goals and boolean games]\label{prop:static3}

\textcolor{white}{s}

\begin{enumerate}
\item Let $\mathcal{B} = (N, \Phi, c, \{\gamma_i\}_{i\in N}, \{\Phi_i\}_{i\in N}) $ be a boolean game. Then there exist a strategic game with goals $(\mathcal{S} = (N, \{\Sigma_i\}_{i \in N}, \pi), G_i)$ and a bijective function $f:V \to \prod_{i \in N} \Sigma_i$ such that for all  $i, u^{\mathcal{B}}_i(v) = u_i(f(v))$.

\item Let  $(\mathcal{S} = (N, \{\Sigma_i\}_{i \in N}, \pi), G_i)$ be a strategic game such that, for each $i\in N$ there is $n \in \mathbb{N} \setminus \{0\}$ with $|\Sigma_i|=2^{n}$ . Then there exist a boolean game $\mathcal{B} = (N, \Phi, c, \{\gamma_i\}_{i\in N}, \{\Phi_i\}_{i\in N}) $, a $k \in \mathbb{N}$, and a bijective function $f:V \to \prod_{i \in N} \Sigma_i$ such that for all  $i, u^{\mathcal{B}}_i(v) = u_i(f(v)) - k$.
\end{enumerate}

\end{proposition}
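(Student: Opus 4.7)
The plan is to deduce this proposition by composing Propositions \ref{prop:static1} and \ref{prop:static2}, which already relate strategic games with goals to plain strategic games on one hand, and strategic games to boolean games on the other. Since a strategic game with goals can always be ``flattened'' into a strategic game with its induced payoff function (and conversely a strategic game can be viewed as a goal-free strategic game with goals), the passage between boolean games and strategic games with goals factors through strategic games.

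For item (1), I would begin with a boolean game $\mathcal{B} = (N, \Phi, c, \{\gamma_i\}_{i\in N}, \{\Phi_i\}_{i\in N})$. By Proposition \ref{prop:static2}.1, there exist a strategic game $\mathcal{S} = (N, \{\Sigma_i\}_{i \in N}, \pi)$ and a bijection $f:V \to \prod_{i \in N} \Sigma_i$ with $u^{\mathcal{B}}_i(v) = \pi_i(f(v))$ for every $i$ and every $v$. Next I would apply (the construction in the proof of) Proposition \ref{prop:static1}.2: set $(\mathcal{S}^\prime, \{G_i\}_{i\in N})$ with $\mathcal{S}^\prime = \mathcal{S}$ and $G_i = \emptyset$ for every $i$. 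Then for any profile of boost factors $\omega$, Definition \ref{def:utilities-boost} gives $u^{(\mathcal{S}^\prime,\{G_i\}_{i\in N})(\omega)}_i(\sigma) = \pi_i(\sigma)$ for every $\sigma$, simply because no outcome satisfies any agent's (empty) goal. Composing, $u^{\mathcal{B}}_i(v) = \pi_i(f(v)) = u_i(f(v))$, which is what item (1) requires.

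For item (2), I would go in the opposite direction. Given $(\mathcal{S}, \{G_i\}_{i\in N})$ with $|\Sigma_i| = 2^{n}$ for each $i$, I first use Proposition \ref{prop:static1}.1 to manufacture an auxiliary strategic game $\mathcal{S}^{\prime\prime} = (N, \{\Sigma_i\}_{i \in N}, \pi^{\prime\prime})$ in which the payoff function literally coincides with the induced utility of the game with goals: $\pi^{\prime\prime}_i(\sigma) = u_i(\sigma)$ for every $\sigma$ and $i$. Since $|\Sigma_i| = 2^{n}$, Proposition \ref{prop:static2}.2 applies to $\mathcal{S}^{\prime\prime}$, yielding a boolean game $\mathcal{B}$, a constant $k \in \mathbb{N}$, and a bijection $f:V \to \prod_{i \in N} \Sigma_i$ with $u^{\mathcal{B}}_i(v) = \pi^{\prime\prime}_i(f(v)) - k$. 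Substituting back, $u^{\mathcal{B}}_i(v) = u_i(f(v)) - k$, which is exactly the desired equation.

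There is no real obstacle in either direction beyond bookkeeping: the only thing to check carefully is that the $G_i = \emptyset$ choice in item (1) really makes the utility collapse onto the payoff independently of the boost factor profile (which it does by the second clause of Definition \ref{def:utilities-boost}, since that clause is vacuously never triggered by a member of the empty set), and that the additive shift $k$ inherited from Proposition \ref{prop:static2}.2 is compatible with costs being non-negative in the boolean game $\mathcal{B}$. The latter is precisely why Proposition \ref{prop:static2}.2 introduces $k$ in the first place, so no new argument is needed here.
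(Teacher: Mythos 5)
Your proposal is correct and matches the paper's own proof, which simply states that the result is a direct consequence of the definitions and of Propositions \ref{prop:static1} and \ref{prop:static2}; you have merely spelled out the composition explicitly (empty goals for item (1), flattening the utility into a payoff function for item (2)). No gaps.
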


\begin{proof}
Direct consequence of the definitions and the previous two results.
\end{proof}

All in all, strategic games with goals and strategic games display a straightforward correspondence. Modulo some minor requirements, this is also true for boolean games. Moreover, the proofs allow us to talk about the boolean/strategic game (with goals) {\em corresponding } to a boolean/strategic game (with goals), and provide automatic procedures to translate among these classes.
%Notice that in the latter case the correspondence is possible only because we make use of {\em degenerate} boolean games, where players have $\bot$ as unique goal formula, together with a positive affine transformation and a restriction on the original strategy space. %When more structured goals are considered the correspondence vanishes.

The reader might at this stage understandably be puzzled. We have introduced strategic games with goals, an alleged generalization of strategic games, showing that, in the end, the two structures are substantially equivalent. The rest of the paper, in what we believe is its main contribution, is devoted to showing that the introduction of dynamic operations, such as the possibility of side-payments in a pre-play phase, brings to light the striking differences between these structures.

\section{Endogenous games and quasi-dichotomous preferences}\label{sec:endo}

Endogenous games \cite{JW05} have been introduced as an extension of normal form games with a pre-play negotiation phase, where players have the possibility, before the game starts, to spend the amount of utility received at certain outcomes to influence their opponents' decision-making.

This section imports the ideas and the techniques of endogenous games to games with quasi-dichotomous preferences and it shows  that, in spite of the static representation results of the previous section, the presence of goal states does make a difference when side-payments are allowed.

Following \cite{JW05} \footnote{While the approach to game transformation by side-payments adopted here \cite{JW05} provides an elegant technical framework that well suits strategic games with goals, we  remind the reader of the existence of earlier related work in the game theory literature \cite{gut1}, \cite{kal1}, \cite{farrell}. Also, a more involved model of pre-play negotiations in non-cooperative games, overcoming a number of limitations of \cite{JW05}, has recently been developed \cite{GT-manifestoArxiv}.
} we enrich a strategic game  with a family  $\{T_i\}_{i\in N}$ where, for each player $i$, $T_i$ is a set of functions of the form $\tau_i: \Sigma \times N \to \mathbb{R}_+$, such that
 $\tau_i(\sigma,i) = 0$ for each $\sigma \in \prod_{i \in N} \Sigma_i$. Each such function specifies how much payoff player $i$ secures the other players in case some particular outcome obtains. We call each $\tau \in \prod_{i \in N} \tau_i$ a {\bf transfer function} % abbreviating $\tau^{0}$ the transfer function such that for all $i,j\in N$ and $\sigma \in \Sigma$ we have $\tau_i(\sigma,j) = 0$.  
and a tuple $(\mathcal{S}, \{T_i\}_{i\in N})$ an {\bf endogenous game}. %For the sake of uniformity with the treatment of boolean games, we limit ourselves to considering pure strategies. 
An {\bf endogenous game with goals} is defined in the expected way, i.e., as a tuple $((\mathcal{S}, \{G_i\}_{i\in N}), \{T_i\}_{i\in N})$ where $(\mathcal{S}, \{G_i\}_{i\in N})$ is a strategic game with goals and $(\mathcal{S}, \{T_i\}_{i\in N})$ is an endogenous game.

It is useful to think of an endogenous game (with goals) as a game consisting of two phases:

\begin{itemize}

\item A pre-play phase, where players simultaneously decide on their transfers to the other players;

\item An actual game play, where the utility of the starting game is updated taking the selected transfers into account.

\end{itemize}

\begin{definition}[Update by side-payments]

Let $\mathcal{S}=(N, \{\Sigma_i\}_{i \in N}, \pi)$ be a strategic game and let $\tau$ be a transfer function. The {\bf play of $\tau$ in $\mathcal{S}$} is the strategic game $\tau(\mathcal{S})= (N, \{\Sigma_i\}_{i \in N}, \pi^{\prime})$ where, for $\sigma \in \Sigma, i \in N$, 

\begin{eqnarray*}
  \pi^{\prime}_i(\sigma)  = \pi_i(\sigma)  {}  + \sum_{j\in N} \tau_j(\sigma,i)  - \sum_{j \in N}\tau_i(\sigma,j) 
\end{eqnarray*}

\end{definition}

In a nutshell, when a game is updated by a transfer function each player:

\begin{itemize} 

\item adds to his payoff at each outcome all the transfers that he receives from the other players at that outcome;

\item subtracts from his payoff at each outcome all the transfers he makes to the other players at that outcome.

\end{itemize}

%The type of side-payments allowed here are different from the ones considered in coalitional games with transferable utilities \cite{OR}. Therefore, the solution concepts we adopt are based on players' individual rationality.

Let $\mathcal{E}=((\mathcal{S}, \{G_i\}_{i\in N}), \{T_i\}_{i\in N})$ be an endogenous game with goals and let $\mathcal{E}(\omega)$ be its instantiation with the boost factor profile $\omega$\footnote{The calculation of the utility after a transfer function has occurred is carried out as expected, i.e., in the updated strategic games with goals instantiated by $\omega$.}. A pair $(\tau, \sigma)$, for $\tau$ being a transfer function and $\sigma$ a strategy profile available at $\mathcal{S}$, is a {\bf solution} of $\mathcal{E}$ if there is a strategy in the two-phase game that is a subgame perfect equilibrium and where $(\tau, \sigma)$ is played on the equilibrium path. 
In other words in order for a pair $(\tau, \sigma)$ to be a solution of an endogenous game with goals $\mathcal{E}= (\mathcal{S}, \{G_i\}_{i\in N}), \{T_i\}_{i\in N})(\omega)$ we require that there exists a strategy for $\mathcal{E}(\omega)$, i.e., a specification of a strategy profile $\sigma^{\prime}$ for the actual game play after every transfer function $\tau^{\prime}$, such that: \begin{enumerate}

\item $(\tau, \sigma)$  is a Nash Equilibrium in the two-phase game;

\item  for every $(\tau^{\prime},\sigma^{\prime})$ in the strategy specification, $\sigma^{\prime}$ is a Nash-equilibrium  in every subgame $\tau^{\prime}(\mathcal{S}, \{G_i\}_{i\in N})(\omega)$ of the two-phase game. 

\end{enumerate}

Subgame perfect equilibria rule out {\em incredible threats} \cite{OR}, in our case the fact that some transfers might be discouraged by the play of strategies that are dominated after the transfers in question are made. %Moreover, we say that a strategy profile $\sigma $, together with a family of reals ${\{\bf x_i \}}_{i\in N}$ is {\bf supportable} if there is a solution $(\tau^{\prime}, \sigma)$ of $\mathcal{E}= (\mathcal{S}, \{T_i\}_{i\in N})$ such that $\pi^{\prime}_i(\sigma)={\bf x_i}$ where $\pi^{\prime}$ is the payoff function of $\tau^{\prime}(\mathcal{S})$. 

For a solution $(\tau, \sigma)$ of $\mathcal{E}= ((\mathcal{S}, \{G_i\}_{i\in N}), \{T_i\}_{i\in N})$ instantiated with $\omega$ we say that $\sigma$ is a {\bf surviving equilibrium} if it is a Nash equilibrium of $(\mathcal{S}, \{G_i\}_{i\in N})(\omega)$.

We also impose, for purposes that will result clear later on, that strategies are {\em uniform} modulo positive affine transformations  \cite{OR}, i.e., players make the same choices in two games $\mathcal{S}$ and $\mathcal{S}^{\prime}$ where $\mathcal{S}^{\prime}$ is obtained from $\mathcal{S}$ via positive affine transformation. In other words, players play the same strategy in games that are {\em practically identical}. We call this requirement {\em strategy uniformity}, an example of which is given in Figure \ref{strategy-uniformity}.

\begin{figure}[htb]\hspace*{\fill}%
\begin{game}{2}{2}
    \> $L$  \> $R$\\
$U$ \> ${2,2}$\>$0,0$\\
$D$   \> $0,0$   \>$1,1$
\end{game}\hspace*{\fill}%
\hspace{2mm}  
\begin{game}{2}{2}
     \> $L$   \> ${R}$\\
$U$ \> {$3,3$}\>$1,1$\\
$D$   \> {$1,1$}   \>{$2,2$}
\end{game}\hspace*{\fill}%
\caption{Strategy uniformity: if a strategy profile is chosen in the game on the left side, then there is no reason for players to deviate from this choice in the game to the right side.}
\label{strategy-uniformity}
\end{figure}

\subsection{Boost factors and game update}

When a pre-play phase precedes a strategic game with goals instantiated with a family of boost factors, we need to understand how the boost factors react to the game update. In particular, if boost factors are to encode a relative distance between goal states and non-goal states, we do want these changes to be reflected in the dynamics introduced by the possibility of side-payments.

To do so, we indicate $\omega^{(\mathcal{S},\{G_i\}_{i\in N})}$ the boost factor applied to game $(\mathcal{S},\{G_i\}_{i\in N})$ and omit the superscript when obvious. We indicate with $\Omega\uparrow$ the set of all boost factors satisfying the following property, under the assumption that $\pi$ is the payoff function of $\mathcal{S}$ and $\pi^{\prime}$ of $\mathcal{S}^{\prime}$:

\begin{equation}
\omega_i^{(\mathcal{S},\{G_i\}_{i\in N})} (x) \geq \omega_i^{(\mathcal{S}^{\prime},\{G_i\}_{i\in N})}(x) \Leftrightarrow \exists \sigma \not\in G_i \\ \mbox{ such that } \forall \sigma^{\prime} \not\in G^{\prime}_i,  \pi_i(\sigma) \geq \pi^{\prime}_i(\sigma^{\prime})
\end{equation}

What the definition says is that, fixing a boost factor, the utility of goal states is pushed upwards the higher the payoff non-goal states yield. It is pushed downwards otherwise.
 This property, as well as the following other, becomes extremely relevant for our purposes when $\mathcal{S}^{\prime} = \tau(\mathcal{S})$, for a given transfer function $\tau$.

We will also consider a different type of boost factor, modelling players that value more goal states the further away they are from the worst possible outcome they could end up in. It is the {\em regret-based} boost factor that generalizes the $\mu$ factor typical of boolean games \cite{mike}. We indicate with $\Omega\downarrow$ the set of all boost factor profiles satisfying the following property: 

\begin{equation}
\omega_i^{(\mathcal{S},\{G_i\}_{i\in N})} (x)\geq \omega_i^{(\mathcal{S}^{\prime},\{G_i\}_{i\in N})}(x) \Leftrightarrow \exists \sigma \not\in G_i \\ \mbox{ such that } \forall \sigma^{\prime} \not\in G^{\prime}_i,  \pi_i(\sigma) \leq \pi^{\prime}_i(\sigma^{\prime})
\end{equation}

What the definition says is that, fixing a boost factor, the utility of goal states is pushed upwards the lower the payoff non-goal states yield. It is pushed downwards otherwise.

As we will see later, having a boost factor profile in the set $\Omega\downarrow$ makes a huge difference in players' strategic behaviour, as players can deliberately strive to increase their cost at non-goal states for the sole reason of increasing their payoff at goal states.

Unless otherwise specified we consider boost factor profiles in the set $\Omega\uparrow$.

\subsection{Budget constraints}

We also consider strategic games with goals where players have a budget constraint, i.e., a limit to the amount of payoff they can end up gaining as a result of the pre-play phase. Intuitively, for $\mathcal{S}=(N, \{\Sigma_i\}_{i \in N}, \pi)$  being a strategic game, each player $i$ is associated a budget constraint $b_i : \Sigma \to \mathbb{R}$, with $b_i(\sigma) \geq \pi_i(\sigma)$, for each $\sigma \in \Sigma$. 

Clearly many transfer functions exist that, applied to a starting strategic game, violate a given budget constraint for some player. Rather than declaring them outright inadmissible and ruling them out from the model, we introduce a uniform punishment to all players if such transfer functions are played. By doing so, we will still be able to talk about violation of the budget constraint requirement and of situations in which players can {\em rationally} decide to rule out such profiles. 

The definition of utility of endogenous games instantiated with a boost factor profile $\omega$ and budget constraints $\{b_i\}_{i\in N}$ is given below. To avoid clutter of notation we denote $(\mathcal{S},G)(\omega, b^{\mathcal{S}})$ the instantiation of strategic game  $\mathcal{S}$ endowed with the family of goals $\{G_i\}_{i\in N}$ boost factors $\{\omega_i\}_{i\in N}$ and the family of budget constraints $\{b^{\mathcal{S}}_i\}_{i \in N}$ defined on $\mathcal{S}$ and omit superscripts and subscripts whenever possible.

\begin{definition}[Utility with budget constraints]\label{def:utilities+bc} Let $(\mathcal{S^{\prime}},G)(\omega, b)$ be an instantiated strategic game with goals and $\epsilon \in \mathbb{R}_+$ a sufficiently small positive real. Let moreover $\mathcal{S} = \tau(\mathcal{S}^{\prime})$ for some strategic game $\mathcal{S^{\prime}}$ and transfer function $\tau$, and let $\pi$ the payoff function of $\mathcal{S}$. The {\bf utility function} $u^{(\mathcal{S},G)(\omega, b^{\mathcal{S^{\prime}}})}$ is defined as follows.

$
u^{(\mathcal{S},G)(\omega, b^{\mathcal{S^{\prime}}})}_i(\sigma)= \left\{
\begin{array}{ll}
\omega(\pi_i(\sigma)) - |D|\kappa & \IF \sigma \in G_i\\
\pi_i(\sigma) - |D|\kappa  & \mbox{otherwise}
\end{array}
\right.
$

\vspace{1cm}
where $\kappa = max \{0 , -min \{\pi_i(\sigma) - b_i(\sigma)  \mid \sigma \in \Sigma \mbox{ and } i\in N \}\}$ and $|D| = \{ i\in N \mid (\pi_i(\sigma) - b_i(\sigma))= \kappa \mbox{ for some } \sigma \in \Sigma\}$ 
\end{definition}

The difference with Definition \ref{def:utilities} is the subtraction of the $\kappa$ factor to the players' payoff. This factor makes a difference only if there exists some player and some outcome where the player exceeds his budget constraints as a result of the pre-play phase. Then $\kappa$ is given the value of the {\em highest excess in the game}, i.e., the highest amount of payoff that a player has exceeded at some outcome. $\kappa$ is multiplied by the number of players $|D|$ that simultaneously exceed the threshold by this highest amount.

At first one might think that players can exceed their budget constraint strategically, i.e., their might be situations in which players might find it rational to play strategies that deliberately go beyond the limit imposed by their constraint for instance in order to punish the other players. The following proposition shows that this is never an equilibrium strategy.

\begin{proposition}

Let $x = (\tau,\sigma)$ be a subgame perfect equilibrium of the endogenous game with goals $\mathcal{E} = ((\mathcal{S}, \{G_i\}_{i \in N}), \{T_i\}_{i\in N})(\omega)$  with budget constraints $\{b_i\}_{i\in N}$ on $\mathcal{S}$. Then the punishment factor of $(\tau(\mathcal{S}), \{G_i\}_{i \in N}))(\omega)$ equals 0.

\end{proposition}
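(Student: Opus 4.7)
My plan is to proceed by contradiction. Suppose $(\tau,\sigma)$ is a subgame perfect equilibrium but the punishment factor $\kappa$ of $\tau(\mathcal{S})$ is strictly positive; I will exhibit a unilateral first-stage deviation for some player that strictly improves their utility, contradicting the Nash equilibrium requirement on the pre-play stage. The essential observation driving the argument is that, by Definition~\ref{def:utilities+bc}, the penalty $|D|\kappa$ is the same constant subtracted from every player's utility at every pure profile of the second stage --- it depends on $\tau$ but not on the chosen $\sigma$ --- so any deviation that strictly decreases $|D|\kappa$ mechanically improves the utility of every player at every second-stage profile, regardless of how the subgame-perfect strategy resolves the subgame after the deviation.

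I would then fix some $i\in D$ together with a profile $\bar\sigma$ witnessing the extremal value of $\kappa$ for $i$. The decisive structural fact I would invoke is that transfers are zero-sum across players: $\sum_{k\in N}(\pi^\tau_k(\bar\sigma)-\pi_k(\bar\sigma))=0$, and combined with the standing requirement $b_k(\sigma)\geq\pi_k(\sigma)$ from the definition of budget constraints this yields $\sum_{k\in N}(\pi^\tau_k(\bar\sigma)-b_k(\bar\sigma))\leq 0$. Hence the budget-excesses at $\bar\sigma$ cannot all lie on the same side of zero --- there is always a player $j\ne i$ with strict slack at $\bar\sigma$ that can absorb a reallocation from $i$. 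The deviation $\tau'_i$ I propose leaves $\tau_i$ unchanged off the set of profiles witnessing $i$'s extremal excess, and at each such profile shifts a small amount $\varepsilon>0$ of outgoing transfer from $i$ to the corresponding slack player. For sufficiently small $\varepsilon$, $i$ drops out of $D$ while no recipient is pushed in, so $|D'|\kappa'\leq(|D|-1)\kappa$ and the penalty strictly decreases by at least $\kappa$.

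It remains to convert this penalty reduction into a strict utility gain for $i$. By the strategy uniformity condition recalled in the excerpt, the subgame-perfect second-stage play $\sigma^{**}$ in the perturbed subgame coincides with the original $\sigma$ for sufficiently small $\varepsilon$, so $i$'s pre-penalty payoff on the equilibrium path shifts by at most $\varepsilon$; taking $\varepsilon\ll\kappa$ then gives $i$ a net utility gain of at least $\kappa-\varepsilon>0$, contradicting the best-response condition for $\tau_i$ in the first stage. The main delicate point is securing the slack recipient $j$ at every witnessing profile, which is precisely what the zero-sum transfer identity supplies in conjunction with $b\geq\pi$; once this is in hand, the remainder follows routinely from the constancy of the penalty across second-stage profiles and the local nature of the proposed perturbation, forcing $\kappa=0$ at every SPE.
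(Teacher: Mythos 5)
Your overall skeleton --- contradiction via a profitable unilateral deviation in the pre-play phase that lowers the punishment term --- is the same as the paper's, and your zero-sum observation ($\sum_k(\pi^\tau_k(\bar\sigma)-b_k(\bar\sigma))=\sum_k(\pi_k(\bar\sigma)-b_k(\bar\sigma))\leq 0$, hence a slack recipient $j$ exists) is a nice justification of a step the paper merely posits. However, your deviation does not support the conclusion you draw from strategy uniformity. You perturb $\tau_i$ only at the profiles witnessing $i$'s maximal excess, so the resulting subgame is \emph{not} a positive affine transformation of $\tau(\mathcal{S})$: the change is profile-dependent. Strategy uniformity, as defined in the paper, only constrains play across games related by positive affine transformations, so it gives you no control over the continuation after your deviation; the SPE may prescribe a Nash equilibrium of the perturbed subgame that is arbitrarily bad for $i$, and ``for sufficiently small $\varepsilon$ the second-stage play coincides with $\sigma$'' does not follow from anything assumed (indeed $\sigma$ may cease to be an equilibrium of the perturbed subgame if $i$ was only weakly best-responding there). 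This is precisely why the paper's deviation raises $i$'s transfer to $j$ by the \emph{same} amount at \emph{every} outcome: that is a translation of $i$'s and $j$'s payoffs, hence a positive affine transformation, and uniformity then pins the continuation to $\sigma$.

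There is a second, independent gap in the accounting. The claim ``$|D'|\kappa'\leq(|D|-1)\kappa$, so the penalty strictly decreases by at least $\kappa$'' fails when $|D|=1$: removing $i$ from the set of maximal violators does not send the penalty to $(|D|-1)\kappa=0$ but to the new maximal excess, which is at least $\kappa-\varepsilon$ if $i$ is still the unique maximizer after the $\varepsilon$-shift. If in addition the on-path profile $\sigma$ is itself one of $i$'s witnessing profiles, $i$ loses $\varepsilon$ of pre-penalty payoff and recovers only $\varepsilon$ of penalty, for a net gain of zero --- no contradiction. You would need either a uniform (all-outcome) shift, so that $i$'s maximal excess drops by the full transferred amount everywhere, or a separate treatment of the $|D|=1$ case.
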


\begin{proof}

Suppose not. Then there exist a set of players $I$ and a strategy profile $\sigma^{*}$ where, for each $i \in I$, $\pi_i(\sigma^{*}) - b_i(\sigma^{*}) = \kappa > 0$ holds in $\tau(\mathcal{S})$. Fix now a player $j\not\in I$ and let $\kappa_j<\kappa$ the maximal excess of player $j$ at $\tau(\mathcal{S})$. Now, by strategy uniformity, player $i$ is better off deviating to a transfer function $(\tau^{*}_i,\tau_{-i})$ where $\tau^{*}_i(j,\sigma^{\prime}) = \tau_{i}(j,\sigma^{\prime})+ \frac{\kappa - \kappa_j}{2}$ for each $\sigma^{\prime}$ as, notice,  $(\tau^{*}_i,\tau_{-i})(\mathcal{S})$ can be obtained from $\tau(\mathcal{S})$ via positive affine transformation and the punishment factor in $(\tau^{*}_i,\tau_{-i})(\mathcal{S})$ is strictly smaller than $|D|k$ . As a consequence, the play of $\sigma$ in $(\tau^{*}_i,\tau_{-i})(\mathcal{S})$ yields player $i$ a strictly better payoff.

\end{proof}

As we will see next, budget constraints allow for a neat connection with the endogenous boolean games studied in \cite{EBG}.

\paragraph{Dynamic asymmetries}

The reader may have wondered what the difference actually is between strategic games with goals and strategic games. After all, we have shown a rather straightforward correspondence between the two structures in Section 2. 

The following proposition is a hint of the fact that the similarity between them is not as obvious as the previous results might suggest. Concretely, when introduce dynamic operations that update the game (such as side-payments), the correspondence ceases to exist.

\begin{proposition}[Dynamic asymmetry]\label{prop:dynamic}

Let $(\mathcal{S},\{G_i\}_{i\in N})$ with $\mathcal{S} = (N, \{\Sigma_i\}_{i \in N}, \pi)$ be a strategic game with goals, $\omega$ a boost factor profile, $\tau^{0}, \tau^{\prime}$ be two  transfer functions, $\mathcal{S}^{0},\mathcal{S}^{\prime}$ the strategic games corresponding to $\tau^{0}(\mathcal{S},\{G_i\}_{i\in N})(\omega)$ and $\tau^{\prime}({\mathcal{S},\{G_i\}_{i\in N})}(\omega)$, respectively. It is not necessarily the case that $\tau^{\prime}(\mathcal{S}^{0}) = \mathcal{S}^{\prime}$.

%Let BG be a boolean game and let SG be its corresponding strategic game. Let t be a transformation applied to BG and SG. Then t(BG) does not correspond to t(SG) unless $\bot$ is all players' only goal.

\end{proposition}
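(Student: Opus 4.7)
Since the conclusion is an inequality, I will construct a concrete counterexample rather than argue in generality. The cleanest choice is to let $\tau^{0}$ be the identically zero transfer function, so that $\mathcal{S}^{0}$ is exactly the strategic game induced from $(\mathcal{S},\{G_i\}_{i\in N})(\omega)$ via the translation of Proposition \ref{prop:static1}. Then $\tau'(\mathcal{S}^{0})$ is obtained by adding the $\tau'$ side-payments to utilities that have already been boosted once and for all, whereas $\mathcal{S}'$ is obtained by first updating the primitive payoff function $\pi$ by $\tau'$ and only afterwards applying the boost factor, now computed relative to the new payoff landscape.

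The mechanism that produces the mismatch is the $\Omega\uparrow$ axiom: by its defining equivalence, if the maximal non-goal payoff for player $i$ strictly increases when $\pi$ is replaced by $\pi' = \tau'(\pi)$, then $\omega_i^{\tau'(\mathcal{S},G)}(x) > \omega_i^{(\mathcal{S},G)}(x)$ pointwise. It therefore suffices to choose $\tau'$ so as to strictly enrich some non-goal profile for a player $i$ while leaving his goal-state payoffs fixed, for example a single positive transfer from a player $j\neq i$ to $i$ at a non-goal profile $\sigma^{n}$. The entry at a goal profile $\sigma^{g}$ of player $i$ is then $\omega_i^{\tau'(\mathcal{S},G)}(\pi_i(\sigma^{g}))$ in $\mathcal{S}'$, whereas in $\tau'(\mathcal{S}^{0})$ the same entry equals $\omega_i^{(\mathcal{S},G)}(\pi_i(\sigma^{g}))$ (unchanged, since $\tau'$ makes no transfer at $\sigma^{g}$); by construction these two numbers differ, so the two induced matrices cannot coincide.

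Concretely I would exhibit a $2\times 2$ example with a single goal cell for some player and employ a representative of $\Omega\uparrow$ of the form $\omega_i^{(\mathcal{S},G)}(x) := x + \max_{\sigma \notin G_i}\pi_i(\sigma) + K$, with $K$ a fixed constant chosen large enough to enforce goal-dominance both in the original and in the transferred game. The one delicate step---the place where I expect to spend the most effort---is the bookkeeping of a single boost-factor family that is a valid boost factor simultaneously for $(\mathcal{S},G)$ and for $\tau'(\mathcal{S},G)$ while still realising the strict monotonic response demanded by $\Omega\uparrow$. Once that family is pinned down, everything else reduces to computing the two induced payoff matrices and inspecting their entries at $\sigma^{g}$.
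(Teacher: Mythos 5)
Your proposal is correct and follows essentially the same route as the paper's proof: take $\tau^{0}$ to be the zero transfer and a $\tau^{\prime}$ that shifts some non-goal payoff of a goal-holding player, then observe via the $\Omega\uparrow$ biconditional that $\mathcal{S}^{\prime}$ recomputes the boost against the new non-goal landscape while $\tau^{\prime}(\mathcal{S}^{0})$ keeps the old boosted value at the goal cell, so the two matrices disagree there. The only differences are cosmetic (you raise the maximal non-goal payoff by an incoming transfer where the paper lowers it by outgoing ones, and you are more explicit than the paper about exhibiting an admissible boost-factor family realising the strict change).
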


\begin{proof}

Consider the strategic game with goals $(\mathcal{S},\{G_i\}_{i\in N})$ where $\mathcal{S}=(\{A,B\}, \{\Sigma_i\}_{i \in N}, \pi)$, $G_B = \emptyset$, $G_A = \{\sigma^{*}_A,\sigma^{*}_B\}$, $\pi_i(\sigma) = 0 $ for all $i\in N, \sigma \in \Sigma$ and $|\Sigma_{A}|=|\Sigma_B| = 2$. Consider the transfer functions $\tau^{0}$ and $\tau^{\prime}$ such that $\tau^{\prime}_A(\sigma,B) = 2$, for all $\sigma  \not\in G_A$ while  $\tau^{\prime}_A(\sigma^{*},B) = 0$ and  $\tau^{\prime}_B(\sigma,A) = 0$ for all $\sigma \in \Sigma$. Notice now $u^{(\tau^{0}(\mathcal{S}),\{G_i\}_{i\in N})(\omega) }_A(\sigma^{*}) = \omega^{(\tau^{0}(\mathcal{S}),\{G_i\}_{i\in N})}_A(0)>0$, that $u^{(\tau^{\prime}(\mathcal{S}),\{G_i\}_{i\in N})(\omega)}_A(\sigma^{*}) = \omega^{(\tau^{\prime}(\mathcal{S}),\{G\}_i)(\omega)}_A(-2)>-2$ and that $\omega^{(\tau^{\prime}(\mathcal{S}),\{G\}_i)(\omega)}_A(-2) < 0$ . $\sigma^{*}$ yields player $A$ a payoff $x \geq 0$ in $\tau^{\prime}(\mathcal{S}^{0})$ but not in $\mathcal{S^{\prime}}$.
 \end{proof}

From the last proposition we gather that updating strategic games with goals by means of side-payments--- but a similar argument extends to boolean games with incentives \cite{mike} in a straightforward fashion --- destroys the direct correspondence with strategic games. The reason lies on the fact that the utility function is calculated in such a way that a player gets a boosted payoff in outcomes satisfying his goal, but always relative to the value yielded by the non-goal states. %This shows that boolean games are a truly different structure from strategic games and as such they are worth studying in their own sake. 
The results in the next section will show that this imbalance bears heavy consequences in terms of equilibrium analysis.

\subsection{Relation of side-payments with payoffs and their dynamics}

This part looks at the relation between side-payments and their effect on payoffs, zooming in the relation with taxation mechanisms, of the type studied in the related literature \cite{mike}.

\paragraph{Side-payments and payoffs}

\begin{proposition}[Normalization]

Let $(\mathcal{S},\{G_i\}_{i\in N})$ with $\mathcal{S} = (N, \{\Sigma_i\}_{i \in N}, \pi)$ be a strategic game with goals, $\omega$ be a profile of boost factors, $\{b_i\}_{i\in N}$ a family of budget constraints on $\mathcal{S}$ and $\tau$ a transfer function on $\mathcal{S}$. Then there is a strategic game with goals $(\mathcal{S}^{\prime},\{G_i\}_{i\in N})$ with $\mathcal{S}^{\prime} = (N, \{\Sigma_i\}_{i \in N}, \pi^{\prime})$ such that, for all $i \in N$ and $\sigma \in \Sigma$, we have that $u^{(\tau(\mathcal{S}),\{G_i\}_{i\in N})(\omega,\{b_i\}_{i\in N})}_i(\sigma)=u^{(\tau^{0}(\mathcal{S}^{\prime}),\{G_i\}_{i\in N})(\omega, \{b_i\}_{i\in N})}_i(\sigma)$.

\end{proposition}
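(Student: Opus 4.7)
The plan is to construct $\mathcal{S}'$ so that its payoff function already absorbs the side-payments prescribed by $\tau$, making $\tau^{0}(\mathcal{S}')$ coincide as a strategic game with $\tau(\mathcal{S})$. Once that is done, since the utility formula in Definition \ref{def:utilities+bc} reads off only the payoff function, the goal sets, the boost-factor profile, and the budget constraints of the current game, the two sides of the equation will evaluate to identical numbers.

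Concretely, I would keep the same set of players $N$, the same strategy sets $\{\Sigma_i\}_{i\in N}$ and the same goal family $\{G_i\}_{i\in N}$, and define
\begin{equation*}
\pi'_i(\sigma) \;:=\; \pi_i(\sigma) \;+\; \sum_{j\in N}\tau_j(\sigma,i) \;-\; \sum_{j\in N}\tau_i(\sigma,j)
\end{equation*}
for every $i\in N$ and every $\sigma\in\Sigma$. By the definition of update by side-payments, $\pi'$ is exactly the payoff function of $\tau(\mathcal{S})$. Since $\tau^{0}$ is the null transfer, $\tau^{0}(\mathcal{S}')$ has payoff function $\pi'$ as well, so $\tau^{0}(\mathcal{S}')$ and $\tau(\mathcal{S})$ coincide as strategic games with goals.

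Next I would check clause-by-clause that the two utilities agree at every $\sigma$. The test $\sigma\in G_i$ is the same on both sides because goals are shared. On the goal branch the value $\omega(\pi'_i(\sigma))$ depends only on $\pi'_i(\sigma)$ and on the boost factor attached to a game with the same payoffs and goals on both sides, so the boost-factor instance $\omega_i^{(\tau(\mathcal{S}),\{G_i\})}$ and $\omega_i^{(\tau^{0}(\mathcal{S}'),\{G_i\})}$ return the same value; on the non-goal branch $\pi'_i(\sigma)$ is plainly identical. Finally, $\kappa=\max\{0,-\min\{\pi_i(\sigma)-b_i(\sigma)\}\}$ and the set of players $D$ attaining this excess are functions of $\pi'$, $\{b_i\}$ and $\Sigma$, all of which are the same on both sides, so the punishment term $|D|\kappa$ matches as well.

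The only delicate point I expect is the formal compatibility of $\{b_i\}_{i\in N}$ with $\mathcal{S}'$: the nominal requirement in Definition \ref{def:utilities+bc} is $b_i(\sigma)\geq\pi_i(\sigma)$, and after absorbing $\tau$ into the payoffs we may well have $\pi'_i(\sigma)>b_i(\sigma)$ for some $\sigma$. As the paper makes explicit just before Definition \ref{def:utilities+bc}, however, transfer-induced budget violations are not ruled out of the model but only penalised through the $\kappa$ mechanism, so the utility expression on the right-hand side remains well-defined and the identity follows. No further case analysis or real computation is needed beyond this bookkeeping.
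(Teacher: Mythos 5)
Your construction is essentially the paper's: both proofs fold the side-payments into a new payoff function $\pi'$ so that $\tau^{0}(\mathcal{S}')$ reproduces $\tau(\mathcal{S})$, and then read off the utility identity from Definition \ref{def:utilities+bc}. The one point of divergence is that the paper's own proof sets $\pi'_i(\sigma)=\pi_i(\sigma)+\sum_{j}\tau_j(\sigma,i)-\sum_{j}\tau_i(\sigma,j)-\kappa$, i.e.\ it additionally subtracts the punishment factor of $\tau(\mathcal{S})$, whereas you omit that term. The paper's subtraction buys formal compliance with the requirement $b_i(\sigma)\geq\pi'_i(\sigma)$ (so that $\{b_i\}_{i\in N}$ is a legitimate budget-constraint family on $\mathcal{S}'$ and $\tau^{0}(\mathcal{S}')$ incurs no punishment), but at a price: with the shift absorbed into the payoffs, the goal branch of the right-hand side becomes $\omega(\pi_i(\sigma)+\cdots-\kappa)$ while the left-hand side is $\omega(\pi_i(\sigma)+\cdots)-|D|\kappa$, and these agree only for translation-equivariant boost factors and $|D|=1$; the claimed identity is therefore not immediate under the paper's construction. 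Your version makes the identity literally trivial --- both sides evaluate the same formula on the same data, and $\kappa$ and $D$ coincide because they are computed from the same payoffs and budgets --- at the cost of $\{b_i\}_{i\in N}$ possibly failing the nominal inequality on $\mathcal{S}'$, which you correctly flag and defend by noting that the model penalises rather than forbids budget violations. On balance your route is the cleaner reading of the statement as written; just be aware that you are treating the budget family as data attached to the pair rather than as a constraint that $\mathcal{S}'$ must itself satisfy, which is a mild reinterpretation of the definition.
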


\begin{proof} Let  $\mathcal{S} = (N, \{\Sigma_i\}_{i \in N}, \pi)$, $\tau$ be a transfer function on $\mathcal{S}$, $\omega$ be a profile of boost factors, and let  $\kappa$ be the punishment factor associated to $\tau(\mathcal{S})$ by the family of budget constraints $\{b_i\}_{i\in N}$. Let $\mathcal{S} = (N, \{\Sigma_i\}_{i \in N}, \pi^{\prime})$ be a game different from $\mathcal{S}$ only in the payoff function % and let  $u^{\prime}$ be the utility function associated to $\beta^{0}(\mathcal{B})$.
and let $\pi^{\prime}_i(\sigma)=\pi_i(\sigma) + \sum_{j\in N} \tau_j(\sigma,i)  - \sum_{j \in N}\tau_i(\sigma,j)  - \kappa$. We have that for all $i\in N$ and $\sigma \in \Sigma$,   $u^{(\tau(\mathcal{S}),\{G_i\}_{i\in N})(\omega,\{b_i\}_{i\in N})}_i(\sigma)=u^{(\tau^{0}(\mathcal{S}^{\prime}),\{G_i\}_{i\in N})(\omega, \{b_i\}_{i\in N})}_i(\sigma)$.\end{proof}

In a nutshell, we can think of transfer functions as payoff modifiers. As we will see, the setting of strategic games with goals allows to explicitly strategise over payoffs by a rational use of side-payments, rather than taking them as immutable factor in a game.

\paragraph{Side-payments and taxes}

Wooldridge et al. \cite{mike} define taxation mechanisms on a boolean game $\mathcal{B}=(N, \Phi, c, \{\gamma_i\}_{i\in N}, \{\Phi_i\}_{i\in N})$ as functions $\alpha_i: V \to \mathbb{R}_+$ where every player receives a monetary {\em sanction} at each particular outcome. 

We employ this notion for the more general framework of strategic games with goals, defining them as functions $\alpha_i: \Sigma \to \mathbb{R}_+$, subtracting the taxes received from a player at a certain outcome to his payoff.

To ease the connection with our previous definitions we introduce the strategic game  $\alpha(\mathcal{S})$, i.e., the  game $\mathcal{S}$ to which the taxation mechanism $\alpha = \prod_{i\in N}\alpha_i$ is applied, as the strategic game $(N, \{\Sigma_i\}_{i\in N}, \pi^{\prime})$ where for each $i\in N,\sigma \in \Sigma$ we have that $\pi^{\prime}_i(\sigma) = \pi_i(v)-\alpha_i(\sigma)$.% The utility of  $\alpha(\mathcal{S})$ is calculated as in Definition \ref{def:utilities} under the boolean transfer function $\tau^{0}$. %\footnote{In an earlier version, Endriss et al. \cite{ulle} define {\em uniform} taxation mechanisms, where players receive a sanction independently of the action of the their fellow players, while  Wooldridge et al. employ more general taxation mechanisms that depend on outcomes rather than actions. This consideration also applies to the boost factor $\mu_i$, which refers --- in Endriss et al. \cite{ulle} --- to the worst strategy player $i$ could execute, rather than the worst outcome he could end up in. These differences do not however play a role in our results.}. 

Transfer functions and taxation mechanisms are of a rather different kind. While the former ones consist of payoff redistributions among players at certain outcomes, without adding or subtracting to the players' total payoff, the latter ones explicitly inject new sanctions into the system to modify players' decision-making. 
However, in a technical sense, we can always find a taxation mechanism having the same effect of a transfer function on an underlying game.

\begin{proposition}[From side-payments to taxes]\label{prop:reduction}

Let $(N, \{\Sigma_i\}_{i\in N}, \pi^{\prime})$ be a strategic game and $\tau$ a transfer function. There exists a taxation mechanism $\alpha$ such that $NE(\tau(\mathcal{S})) = NE(\alpha(\mathcal{S})) $.
\end{proposition}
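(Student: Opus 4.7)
The plan is to construct the taxation mechanism $\alpha$ by simply mirroring the net payoff effect of $\tau$ on each player, and then absorbing any sign issues via a player-specific additive constant (which is harmless for Nash equilibria of pure strategic games).

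First, for a fixed transfer function $\tau$ and each player $i\in N$, I would introduce the net transfer
$$\Delta_i(\sigma) = \sum_{j\in N}\tau_j(\sigma,i) - \sum_{j\in N}\tau_i(\sigma,j),$$
so that by the definition of update by side-payments the payoff in $\tau(\mathcal{S})$ is $\pi_i(\sigma)+\Delta_i(\sigma)$. The naive candidate $\alpha_i(\sigma) := -\Delta_i(\sigma)$ would exactly reproduce the payoffs in $\tau(\mathcal{S})$, but it is not in general a taxation mechanism since it can be negative: taxation mechanisms are required to take values in $\mathbb{R}_+$, whereas $\Delta_i$ can be positive (a player may be a net receiver at some outcomes).

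To fix this I would shift: let $C_i = \max_{\sigma\in\Sigma}\Delta_i(\sigma)$ and set
$$\alpha_i(\sigma) \;=\; C_i - \Delta_i(\sigma).$$
By definition of $C_i$, $\alpha_i(\sigma)\geq 0$ for every $\sigma$, so $\alpha=(\alpha_i)_{i\in N}$ is a legitimate taxation mechanism. Then the payoff of player $i$ in $\alpha(\mathcal{S})$ at profile $\sigma$ equals
$$\pi_i(\sigma)-\alpha_i(\sigma) \;=\; \pi_i(\sigma)+\Delta_i(\sigma) - C_i \;=\; \pi^{\tau(\mathcal{S})}_i(\sigma) - C_i,$$
i.e. it differs from the payoff in $\tau(\mathcal{S})$ by a constant $C_i$ that depends only on the identity of the player, not on the strategy profile.

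The final step is to invoke the standard fact that pure Nash equilibria are invariant under adding a player-specific constant to each player's payoff across the whole profile space: for any fixed $\sigma_{-i}$, the ranking over $\Sigma_i$ induced by $\pi^{\tau(\mathcal{S})}_i(\cdot,\sigma_{-i})$ and by $\pi^{\tau(\mathcal{S})}_i(\cdot,\sigma_{-i})-C_i$ is identical, so the best-response correspondences of the two games coincide for every player, giving $NE(\tau(\mathcal{S}))=NE(\alpha(\mathcal{S}))$. The only subtle point in the argument is thus checking non-negativity of $\alpha_i$, which is immediate once the shift by $C_i$ is introduced; no further obstacles are expected.
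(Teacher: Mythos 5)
Your proof is correct and is precisely the argument the paper leaves implicit (its own proof is just ``Straightforward''): cancel the net transfer $\Delta_i$ with the tax and absorb the sign problem into a player-specific constant shift, which is invisible to best responses. The only tacit assumption is that $\Sigma$ is finite so that $C_i=\max_{\sigma}\Delta_i(\sigma)$ exists, which is standard in this setting.
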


\begin{proof}
Straightforward.
\end{proof}

Proposition \ref{prop:reduction} shows that, in some respect, taxation mechanisms can {\em simulate} transfer functions. %: if we start from two normal form games that only differ from each other in the utility function, we can find a taxation mechanism that can transform the first into the second, up to positive affine transformation \cite{OR}, while this is not the case with a transfers function.
It must however be said that simulations of this kind only make sense when both taxation mechanisms and transfers functions are fixed. But unlike taxation mechanisms, that are decided externally, transfer functions bear further strategic considerations. As made clear in the motivating example, it is not enough to establish that a transfer function induces equilibria in the resulting game, but we also need to establish whether the transfer function itself is part of a larger equilibrium, i.e., whether players are not better off by switching to different transfers. %The following section investigates the properties of Nash equilibria in the two-phase game. %, also studying the problem of reaching given desirable properties by side payments.

\subsection{Endogenous Boolean Games}

As we have observed, preference quasi-dichotomy in boolean games is modelled using a {\em boost} factor $\mu_i$ for states satisfying player $i$'s goal, which gets his utility increased adding to the payoff already associated to each such state the cost and the taxes he would get at his worst possible outcome. As costs and taxes are never negative, this leads players' utility to be maximal at goal states.
The $\mu_i$ factor is a classical case of regret-based boost factor --- i.e., it belongs to $\Omega_{\downarrow}$ --- and, as the reader will have noticed, the expression $- \beta_i(v)$, describing the net gain that player $i$ obtains from the transfer function $\beta$ at $v$, can exceed $c_i(v)$, i.e., the cost $i$ is incurring in  at $v$. This means that, using side-payments, players might incur in {\em de facto} negative costs for performing certain actions and, consequently, the boost factor associated to goal states might also be negative, disrupting preference quasi-dichotomy. To avoid this drawback we use budget constraints and, as usual,  we impose that {after a transfer is made, each game undergoes a positive affine transformation}, where each valuation $v$ is assigned an extra correction factor $|D|\kappa$ for $D$ being the amount of players exceeding of the maximum level \footnote{In the original article \cite{EBG} the correction factor was simply $\kappa$. This further modification allows to make players individually responsible for their simultaneous maximum excess and to break ties. In any case, the results of \cite{EBG} all carry over to this framework.}.

We now define an endogenous boolean game as a tuple $(\mathcal{B}, \{T_i\})$, where $\mathcal{B}$ is a boolean game and $T_i$ is a set of transfer functions for player $i$, fixing both the boost factors --- i.e., $\mu_i$ --- and a family of budget constraints $\{b_i\}_{i \in N}$ such that  $b_i(v) = - c_i(v)$ for each player $i$ and outcome $v$, omitting them when possible to ease reading.

Now we can define a quasi-dichotomous utility function for boolean games with side-payments.

\begin{definition}[Utility]\label{def:utilities} Let $\mathcal{B}$ be a boolean game, $\beta$ a (boolean) transfer function and $\epsilon \in \mathbb{R}_+$ a sufficiently small positive real. The {\bf utility function} $u^{\beta(\mathcal{B})}: N \times V \to \mathbb{R}$ assigning to each player the payoff $u^{\beta(\mathcal{B})}_i(v)$ he receives at valuation $v$ is defined as follows.

$
u^{\beta(\mathcal{B})}_i(v)= \left\{
\begin{array}{ll}
\epsilon + \mu_i  - (c_i(v) + \beta_i(v) + \kappa)& \IF v \models \gamma_i\\
- (c_i(v) + \beta_i(v)  + \kappa) & \mbox{otherwise}
\end{array}
\right.
$
\end{definition}

where $\kappa$ is defined as in the general case.

The definition basically says that at outcomes --- i.e., valuations --- satisfying players' goals, players sum to their costs and the appropriately corrected transfers made also an extra factor $\epsilon+\mu_i$. For the other outcomes instead, players receive their costs plus the appropriately corrected transfers made.
The definition shows the specificity of boolean games, where players are cost-minimizers but never favour a cheap choice that does not reach a goal to an expensive one that instead does. Therefore, the expression $u_i(v)$ is always strictly positive in case $v\models \gamma_i$ and never strictly positive otherwise. %In other words, even if transfer functions can more than compensate the cost a player incurs in at a certain outcome, the correction factor implicitly carries out a positive affine transformation of the whole game, preserving the dichotomy of preferences.

Using the results in \cite{mike} we can transfer the following results to our framework.

\begin{proposition}[Complexity of finding equilibria]
Let $\mathcal{B} = (N, \Phi, c, \{\gamma_i\}_{i\in N}, \{\Phi_i\}_{i\in N})$ be a boolean game and $\phi$ a boolean formula constructed on $\Phi$. The following hold:

\begin{enumerate}

\item The problem of verifying whether, for $\beta \in \prod_{i\in N} B_i$ and $v\in V$, $v\in NE(\beta(\mathcal{B}))$ is {\textit{co-NP complete}};
\item The problem of verifying whether some boolean transfer function $\beta$ and outcome $v \in NE(\beta(\mathcal{B}))$ exist such that $v \models \phi$ is {\textit{$\Sigma^{2}_p$ complete}};
\item The problem of verifying whether some boolean transfer function $\beta$ exists such that $NE(\beta(\mathcal{B})) \neq \emptyset$ and for all $v^{\prime} \in NE(\beta(\mathcal{B}))$ we have that $v^{\prime} \models \phi$ is {\textit{$\Sigma^{2}_p$ complete}}.
\end{enumerate}

\end{proposition}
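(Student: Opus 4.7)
The plan is to prove each item by leveraging Proposition \ref{prop:reduction}, which establishes that for every transfer function $\tau$ there is a taxation mechanism $\alpha$ with $NE(\tau(\mathcal{S})) = NE(\alpha(\mathcal{S}))$, combined with the corresponding complexity results for taxation-based boolean games already established in \cite{mike}. The overall recipe is to show that the extra layer introduced by side-payments can be encoded into a taxation mechanism of polynomial size, so that membership in the relevant complexity class follows by simulating the existing algorithms, and hardness follows by reducing the taxation version of the problem back to the transfer version (which is immediate because a taxation mechanism can be mimicked by a suitable transfer function that routes the sanction amount to a dummy recipient, while respecting the budget constraint via the $\kappa$ correction).

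For item (1), I would first argue membership in co-NP: given $(\beta, v)$, a candidate certificate for $v \notin NE(\beta(\mathcal{B}))$ is a player $i$ together with a deviation $v'_i \in V_i$; verifying that $u^{\beta(\mathcal{B})}_i(v'_i, v_{-i}) > u^{\beta(\mathcal{B})}_i(v)$ reduces to evaluating the costs, the transfers, the boost factor $\mu_i$, and checking $v \models \gamma_i$ and $(v'_i, v_{-i}) \models \gamma_i$, each of which is polynomial in the representation of $\mathcal{B}$ and $\beta$. For the co-NP lower bound, I would use the fact that the analogous problem for taxed boolean games is co-NP complete in \cite{mike}, and exhibit a direct reduction: given a taxation $\alpha$, construct a transfer function $\beta$ that taxes each player $i$ the same amount $\alpha_i(v)$ at outcome $v$ (for instance by transferring $\alpha_i(v)/(|N|-1)$ to each other player and relying on the normalisation of Proposition on normalisation to absorb any bookkeeping), preserving the Nash equilibrium set.

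For items (2) and (3), the $\Sigma^2_p$ upper bound is routine: for (2) guess the pair $(\beta, v)$ and ask a co-NP oracle whether $v \in NE(\beta(\mathcal{B}))$, conjoined with the polynomial-time check $v \models \phi$; for (3) guess $\beta$, then verify via a $\Pi^p_1$ (= co-NP) computation that $NE(\beta(\mathcal{B})) \neq \emptyset$ and that every equilibrium satisfies $\phi$. The slightly subtle point here is bounding the size of the witness $\beta$: since transfers only matter modulo the equilibrium structure, one can restrict attention to rational transfers of polynomial bit length (bounded by the payoff range of $\mathcal{B}$), so the guess is of polynomial size. The $\Sigma^2_p$ lower bounds come from the corresponding results on taxation-based boolean games in \cite{mike} via the same taxation-to-transfer reduction sketched above.

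The main obstacle I expect is not the high-level structure but the lower bounds, specifically verifying that the reductions from the taxation setting genuinely preserve the quasi-dichotomous utility structure once the budget constraints, the punishment factor $\kappa$, and the boost factor $\mu_i$ are taken into account. In particular, one must ensure that the chosen transfer function never causes a rational player to exceed the budget $b_i(v) = -c_i(v)$, so that $\kappa = 0$ on the equilibrium path, and that it does not accidentally flip the sign conditions that separate goal from non-goal states. This can be arranged by scaling down transfers to remain within the admissible range while preserving strict inequalities in the incentives, but writing out the verification carefully is where most of the work will be.
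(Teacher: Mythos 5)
Your upper-bound arguments (co-NP certificate for item (1); guess a polynomial-size $\beta$ and query a co-NP oracle for items (2) and (3), with the remark on bounding the bit length of the witness transfers) are sound and are essentially what the paper relies on; the paper itself gives no detail and simply asserts that the proofs of Propositions 1, 6 and 14 of \cite{mike} carry over verbatim to the setting where the existentially quantified object is a transfer function rather than a taxation scheme.

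The genuine gap is in your hardness reductions. You propose to reduce the taxation versions of the problems to the transfer versions by simulating a taxation mechanism $\alpha$ with a transfer function, e.g.\ by having player $i$ send $\alpha_i(v)/(|N|-1)$ to each opponent at $v$. This does not preserve the Nash equilibrium set: transfers conserve total payoff at each outcome while taxes remove it, so player $i$'s net payoff change under your scheme is $-\alpha_i(v) + \frac{1}{|N|-1}\sum_{j\neq i}\alpha_j(v)$, and the second term varies with $v$ in a way $i$ does not control, perturbing $i$'s best responses. More structurally, matching each player's change to $-\alpha_i(v)$ up to a per-player constant forces $\sum_i \alpha_i(v)$ to be outcome-independent, which fails for arbitrary $\alpha$; the dummy-recipient variant you mention can absorb this, but it needs care (the dummy must control an atom in a boolean game, hence has nontrivial strategies) and, for the existential problems (2) and (3), you also need the converse direction --- that any successful transfer function yields a successful taxation scheme --- which is Proposition \ref{prop:reduction} for plain strategic games but is precisely the part that must be re-verified in the presence of $\mu_i$, the budget constraints and the $\kappa$ correction. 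For item (1) the detour is unnecessary anyway: co-NP-hardness already holds for the zero transfer $\beta^0$, since deciding whether a valuation is a Nash equilibrium of a plain boolean game is co-NP-hard and that instance is the special case $\beta=\beta^0$. The cleaner route, and the one the paper implicitly takes, is to re-run the hardness gadgets of \cite{mike} directly with transfer functions in the role of the taxation scheme, rather than to interreduce the two incentive models.
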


\begin{proof}

The proofs of \cite[Proposition 1]{mike}, \cite[Proposition 6]{mike}, \cite[Proposition 14]{mike} immediately carry over to our case. \end{proof}

The following section carries out an equilibium analysis of endogenous games with goals in the full-blown two-phase game.

\section{Equilibrium analysis}\label{sec:equilibrium}

%In this section we analyze the properties of equilibria of endogenous games with goals.

%In this section we analyze the subgame perfect EBGs, focusing on issues such as equilibrium survival and supportability of strategies.

The classical results on equilibrium survival for the case of endogenous games \cite{JW05} are centred on the notion of solo payoff, i.e., the payoff that a single player $i$ can guarantee if the opponents $-i$ do not make any transfer. 

\begin{definition}[Solo payoff]\label{def:solo}

Let $\mathcal{E}(\omega)= ((\mathcal{S},\{G_i\}_{i\in N}), \{T_i\}_{i\in N})(\omega)$ be an endogenous game with goals instantiated by a boost factor profile $\omega$. The {\bf solo payoff} $\hat{s}_i$ for player $i$, where {sup} is the least upper bound operator, is

$$\hat{s}^{\mathcal{E}(\omega)}_i= \mbox{sup}_{\tau_i} (min_{\rho \in NE^{\Delta}((\tau_i, {\tau^0}_{-i})(\mathcal{E}(\omega))} E^{(\tau_i, {\tau^0}_{-i})(\mathcal{E}(\omega))}(\rho))$$

\end{definition}

In words, the solo payoff of player $i$ is given by the best transfer $i$ can make, under the expectation that his opponents will not make any transfer and will play the worst for $i$ Nash equilibrium in each subgame. When the underlying game is fixed and no confusion can arise, we use the notation $\hat{s}_i$.

Two important facts are known for endogenous games  \cite{JW05}:

\begin{enumerate}

\item when $N\leq 2$, a Nash equilibrium survives if and only if each player gets at least his solo payoff;
\item when $N>2$, every pure strategy Nash equilibrium survives.

\end{enumerate}

It is not obvious at all that the stability results carry over to our case, due to the elaborated construction of the utility function (Definition \ref{def:utilities}).
In fact we can show that the second statement does not hold for endogenous games with goals.% We focus on the case of $|N|=2$ 
%\footnote{The case of $N>3$ can be shown by similar argument.}.

\begin{proposition}\label{prop:nonsurvival}

There exists an endogenous game with goals $\mathcal{E}(\omega)$ with more than $2$ players and Nash equilibrium outcome $\sigma$ of $\mathcal{E}(\omega)$ that is not a surviving equilibrium.

\end{proposition}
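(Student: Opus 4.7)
The plan is to construct explicitly a three-player endogenous game with goals and a subgame-perfect-equilibrium solution $(\tau^{*},\sigma^{*})$ where $\sigma^{*}$ is not a Nash equilibrium of the underlying $(\mathcal{S},\{G_i\}_{i\in N})(\omega)$. The example is essentially the motivating two-player scenario of Section~1, lifted to three players by adding a ``silent'' goal-sharing player $C$.

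Take $N=\{A,B,C\}$ with each player controlling a switch $s_i\in\{0,1\}$ at cost $1$ when on and $0$ when off, so $\pi_i(\sigma)=-s_i$. Set $G_A=G_C=\{(1,1,1)\}$, $G_B=\emptyset$, and use the $\Omega\uparrow$ boost factor $\omega_i^{\mathcal{S}}(x)=x+\max_{\sigma\not\in G_i}\pi_i^{\mathcal{S}}(\sigma)+10$, which satisfies the two required boost-factor properties. A case check shows the unique Nash equilibrium of the underlying game is $(0,0,0)$: $B$, having no goal, strictly prefers $s_b=0$, and then $A$ and $C$ cannot reach their common goal and also minimise cost. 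Hence $\sigma^{*}:=(1,1,1)\notin NE(\mathcal{S},\{G_i\}_{i\in N})(\omega)$, since $B$ strictly benefits by switching.

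Define $\tau^{*}$ in which only $A$ transfers: $A$ pays $B$ exactly $1$ at outcome $(1,1,1)$ and nothing elsewhere. In the updated subgame $\tau^{*}(\mathcal{S})$, $B$ is indifferent between $s_b=0$ and $s_b=1$; under a tie-breaking convention favouring $s_b=1$, the profile $(1,1,1)$ is a Nash equilibrium of $\tau^{*}(\mathcal{S})$, since $A$ then strictly prefers $s_a=1$ for her goal boost (utility $8$ versus $0$) and $C$ likewise (utility $9$ versus $0$). I complete the SPE by letting every off-path subgame play any of its Nash equilibria. Pre-play stability reduces to a finite check: $A$ cannot decrease the transfer below $1$ without $B$ reverting to $s_b=0$ (collapsing the subgame back to the original NE $(0,0,0)$ and costing $A$ her goal boost), nor can $A$ profitably increase it; and $B,C$ have no incentive to transfer at all. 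Thus $(\tau^{*},\sigma^{*})$ is a solution of $\mathcal{E}(\omega)$ whose outcome is not a Nash equilibrium of the underlying game.

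The main technical subtlety will be the continuous transfer space, where the minimum strictly-sufficient bribe from $A$ is not attained; I handle this by invoking the tie-breaking convention (legitimate because both of $B$'s strategies are best responses at the indifference point), or equivalently by restricting transfers to a discrete grid on which no indifference arises.
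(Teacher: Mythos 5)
There is a genuine gap: you have proved a different statement from the one asserted. The proposition claims that some pure-strategy Nash equilibrium $\sigma$ of the underlying instantiated game $(\mathcal{S},\{G_i\}_{i\in N})(\omega)$ fails to \emph{survive}, i.e.\ that no transfer profile $\tau$ makes $(\tau,\sigma)$ playable on the path of a subgame perfect equilibrium of the two-phase game; this is precisely what contradicts the Jackson--Wilkie result that for more than two players \emph{every} pure Nash equilibrium survives. Your witness $\sigma^{*}=(1,1,1)$ is, by your own computation, \emph{not} a Nash equilibrium of the underlying game, so it cannot serve as the required $\sigma$. What you establish instead is that the pre-play phase can \emph{create} a supportable outcome that was not an equilibrium before --- a different phenomenon, and one that already occurs in ordinary endogenous games. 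The outcome that could witness the proposition in your construction is the unique underlying equilibrium $(0,0,0)$, but you never argue that it fails to survive; note that exhibiting one SPE whose path is $(\tau^{*},(1,1,1))$ does not preclude another SPE whose path is $(\tau^{0},(0,0,0))$.

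The repair is to aim the argument at $(0,0,0)$: you must show that for \emph{every} transfer profile $\tau$ and every subgame-perfect continuation under which $(\tau,(0,0,0))$ would be on the equilibrium path, some player has a profitable unilateral pre-play deviation --- profitable even against the worst Nash equilibrium of the resulting subgame. The paper does exactly this (with all payoffs zero, $G_B=G_C=\emptyset$ and $G_A$ a singleton): since the boost factor keeps $A$'s goal state above every non-goal state no matter how much $A$ pays, $A$ can always offer the others enough at the goal profile to make it the outcome of every Nash equilibrium of the ensuing subgame. Your example can be salvaged along the same lines --- for instance, $A$ offers $B$ and $C$ two units at every outcome where their respective switches are on, making $s_b=s_c=1$ strictly dominant in the subgame, forcing $(1,1,1)$ as its unique Nash equilibrium and leaving $A$ with a boosted utility net of transfers that exceeds the $0$ obtained at $(0,0,0)$ --- but this quantification over all candidate solutions supporting $(0,0,0)$ is the substance of the proof and is absent from your write-up. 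The tie-breaking subtlety you flag concerns only the construction you carried out, not the claim that needed proving.
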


\begin{proof}

Consider the strategic game with goals $(\mathcal{S},\{G_i\}_{i\in N})$ where $\mathcal{S}=(\{A,B,C\}, \{\Sigma_i\}_{i \in \{A.B,C\}}, \pi)$, and where $G_B = G_C = \emptyset$, $G_A = \{\sigma^{*}_A,\sigma^{*}_B, \sigma^{*}_C\}$, $\pi_i(\sigma) = 0 $ for all $i\in N, \sigma \in \Sigma$. %Consider the transfer functions $\tau^{0}$ and $\tau^{\prime}$ such that $\tau^{\prime}_A(\sigma,B) = 2$, for all $\sigma \in \Sigma \not\in G_A$ while  $\tau^{\prime}_A(\sigma^{*},B) = 0$ and  $\tau^{\prime}_B(\sigma,A) = 0$ for all $\sigma \in \Sigma$.
Clearly each $(\sigma^{*}_A,\sigma^{*}_B, \sigma^{\prime}_C)$ for $\sigma^{\prime}_C \neq \sigma^{*}_C$ is a Nash equilibrium. However it is not a surviving one. For suppose it was and take $(\tau, (\sigma^{*}_A,\sigma^{*}_B, \sigma^{\prime}_C))$ to be the strategy played on the equilibrium path. Now for each payoff that player $C$ is getting at $(\tau, (\sigma^{*}_A,\sigma^{*}_B, \sigma^{\prime}_C))$ player $A$ is better off deviating to a transfer $\tau^{\prime}_A$ such that $\tau^{\prime}_A(C,\sigma^{*}) >  \pi^{\tau}_C(\sigma^{*}_A,\sigma^{*}_B, \sigma^{\prime}_C) - \pi^{\tau}_C(\sigma^{*}_A,\sigma^{*}_B, \sigma^{*}_C)$, making it worthwhile for $C$ to satisfy $A$'s goal.
\end{proof}

The idea of the proof is quite simple: when goal realization is at stake, players could go to any length to have it satisfied. So Nash equilibria of the initial game will not survive if a joint deviation of a group of players satisfies the goal of some other player without compromising their own. Figure \ref{nonsurvival:goals} illustrates one more such scenario.

\begin{figure}[htb]\hspace*{\fill}%
\begin{game}{2}{2}[{$1$}]
     \> {$L$}   \> $R$\\
{$U$} \> \red{ $1,1,1$   }\>{$1,1,1$}\\
$D$   \>{${1,1,1}$}  \>{$1,1,1$}
\end{game}\hspace*{\fill}%
\begin{game}{2}{2}[$2$]
     \>{$L$}   \> $R$\\
{$U$} \> { $1,1,1$   }\>$1,1,1$\\
$D$   \>$1,1,1$  \>{$1,1,1$}
\end{game}\hspace*{\fill}%
\caption{A three player game, with the third player choosing the matrix to be played. \textcolor{red}{Row} wants $(U,L,1)$ to be realized and, no matter what the distribution of payoffs after the pre-play phase looks like, he is willing to compensate the other players to go along that strategy. Notice that \textcolor{red}{Row}'s deviation in the pre-play phase is effective, as it does not compromise the opponents' goals. The Nash equilibrium outcome $(D,R,2)$ is not surviving.}
\label{nonsurvival:goals}
\end{figure}

Proposition \ref{prop:nonsurvival} is of fundamental importance, not only because it shows that there is a boolean game with $|N|>2$ where a pure strategy Nash equilibrium is not surviving, at odds with well-known results for the strategic games case, but because Propositions \ref{prop:static1}, \ref{prop:static2}, \ref{prop:static3} were suggesting a straightforward correspondence between strategic games and strategic games with goals. 
The reason of this imbalance was already hinted at by Proposition \ref{prop:dynamic}, which showed how the application of the same transfer function to a strategic game and to its corresponding strategic game with goals was not guaranteed to yield corresponding structures. Proposition \ref{prop:nonsurvival} uses the same idea to show that dynamic factors such as transfer functions are enough to falsify fundamental results for the otherwise statically correspondent strategic games.

With boost factors in $\Omega \uparrow$, it is also interesting to notice that, for outcomes that are uniquely shared joint goal among players, i.e., for outcomes $\sigma$ for which $\sigma \in \bigcap_{i \in N} G_i$ and for which $(\sigma^{\prime}_{i},\sigma_{-i}) \not\in G_i$ unless $(\sigma^{\prime}_{i},\sigma_{-i}) = \sigma$,  we have that $\sigma$ is always a surviving equilibrium.

\begin{proposition}\label{prop:joint}
Let $\mathcal{E}(\omega)$ be an endogenous game with goals instantiated by a boost factor profile $\omega \in \Omega \uparrow$. Then every outcome $\sigma$, for which  $\sigma \in \bigcap_{i \in N} G_i$ and for which $(\sigma^{\prime}_{i}\sigma_{-i}) \not\in G_i$ unless $(\sigma^{\prime}_{i},\sigma_{-i}) = \sigma$,  we have that $\sigma$ is always a surviving equilibrium.

\end{proposition}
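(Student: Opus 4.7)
The plan is to verify both components of a surviving equilibrium directly: that $\sigma$ is a Nash equilibrium of the initial game $(\mathcal{S}, \{G_i\}_{i\in N})(\omega)$, and that the pair $(\tau^0,\sigma)$ can be placed on the equilibrium path of a subgame perfect strategy for the two-phase game, where $\tau^0$ is the zero transfer. The continuation rule I would adopt is the constant one: specify $\sigma$ itself as the continuation play in every subgame $\tau'(\mathcal{S}, \{G_i\}_{i\in N})(\omega)$, regardless of which $\tau'$ was chosen in the pre-play phase.

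The actual-play check is the cleanest. Fix any $\tau'$ and any unilateral deviation $\sigma'_i \neq \sigma_i$; by the uniqueness hypothesis on $\sigma$, the deviated profile $(\sigma'_i,\sigma_{-i})$ lies outside $G_i$, so $i$'s utility there is the raw payoff $\pi^{\tau'(\mathcal{S})}_i(\sigma'_i,\sigma_{-i})$. Since $\sigma \in G_i$, on the other hand, the utility at $\sigma$ is the boosted value $\omega^{\tau'(\mathcal{S})}_i(\pi^{\tau'(\mathcal{S})}_i(\sigma))$, which by property~(2) of boost factors strictly exceeds any non-goal payoff. Hence $\sigma$ is a Nash equilibrium of every subgame $\tau'(\mathcal{S}, \{G_i\}_{i\in N})(\omega)$; specializing to $\tau'=\tau^0$ gives that $\sigma$ is Nash in the initial game as well, which also dispatches clause~(1) of the definition of surviving equilibrium.

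For the pre-play check, suppose player $i$ considers deviating from $\tau^0$ to some $\tau'_i$. Because transfers are non-negative and $\tau^0_{-i}$ gives $i$ no inflows, $i$'s payoffs in $(\tau'_i,\tau^0_{-i})(\mathcal{S})$ are pointwise no greater than in $\mathcal{S}$; in particular $\pi^{(\tau'_i,\tau^0_{-i})(\mathcal{S})}_i(\sigma) = \pi_i(\sigma) - \sum_{j} \tau'_i(\sigma,j) \leq \pi_i(\sigma)$, and the supremum of $i$'s non-goal payoffs only weakly decreases. The $\Omega\uparrow$ property then yields $\omega^{(\tau'_i,\tau^0_{-i})(\mathcal{S})}_i(x) \leq \omega^{\mathcal{S}}_i(x)$ for every $x$, and composing with monotonicity~(property~(1)) gives $\omega^{(\tau'_i,\tau^0_{-i})(\mathcal{S})}_i(\pi^{(\tau'_i,\tau^0_{-i})(\mathcal{S})}_i(\sigma)) \leq \omega^{\mathcal{S}}_i(\pi_i(\sigma))$, so the deviation is weakly unprofitable. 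The main delicate point is precisely this interplay between two distinct downward effects on the deviator's utility — the direct cost of the transfers made at $\sigma$ and the indirect shift of the boost factor caused by lower non-goal payoffs — and the argument goes through cleanly because transfers are non-negative and $\omega \in \Omega\uparrow$ aligns both effects in the same direction. With clauses~(1) and~(2) of the solution definition in hand and $\sigma$ verified to be Nash in the initial game, $\sigma$ is a surviving equilibrium.
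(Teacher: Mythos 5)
Your argument is correct, and it supplies the detail the paper omits: the published proof of this proposition is just ``Straightforward''. Your writeup is a sound elaboration of that claim --- the constant continuation rule, the use of property~(2) of boost factors together with the uniqueness hypothesis to make $\sigma$ a Nash equilibrium of every subgame, and the observation that non-negative outgoing transfers and the $\Omega\uparrow$ condition push both the payoff at $\sigma$ and the boost factor in the same (downward) direction for a pre-play deviator --- which is exactly the point that fails for $\Omega\downarrow$ in Proposition~\ref{prop:nonsurvivalbg}.
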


\begin{proof}
Straightforward
\end{proof}

Also, in spite of the negative result discussed above, we are still able to show a sufficient condition for pure strategy Nash equilibria to survive in certain strategic games with goals, independently of the number of players involved. % Let us first  denote ${\bf s}_i$ the expression $$sup_{\beta_i} (max_{\beta^{nul} \in \prod_{i\in N} B_i} (min_{\sigma \in NE(\beta_i, {\beta^{nul}}_{-i})} E_i(\beta_i, {\beta^{nul}}_{-i}, \sigma)))$$ where $\beta^{nul}_{ -i}$ is a tuple of offers by the opponents of $i$ such that player $i$ receives a net gain of $0$. In other words this expression, which we call {\bf unified solo payoff} encodes the payoff that a single player can guarantee in some transfer when his opponents do not offer anything to him. Notice that for $|N|=2$ this expression boils down to the solo payoff given in Definition \ref{def:solo}. 

\begin{proposition}[Survival]\label{prop:survival2}

Let $(\mathcal{S},\{G_i\}_{i\in N})$ be strategic game with goals and let it be instantiated by a boost factor profile $\omega$. A pure strategy Nash equilibrium $\sigma$ of $\mathcal{S}$ survives whenever $u_i(\sigma) \geq \hat{s}_i$ for each $i\in N$.

\end{proposition}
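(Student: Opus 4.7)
The strategy is to exhibit the pair $(\tau^{0},\sigma)$, where $\tau^{0}$ is the identically-zero transfer function, as a solution of $\mathcal{E}(\omega)$. To do this I will construct a full two-phase strategy specification, show it is subgame perfect, and verify that $(\tau^{0},\sigma)$ lies on its equilibrium path. The two-phase strategy prescribes: (i) in the pre-play phase, every player plays $\tau^{0}_i$; (ii) for every transfer profile $\tau'$, a Nash equilibrium of the induced strategic game associated with the subgame $\tau'((\mathcal{S},\{G_i\}_{i\in N}))(\omega)$ is selected. The on-path subgame (following $\tau^{0}$) coincides with $(\mathcal{S},\{G_i\}_{i\in N})(\omega)$, so I prescribe $\sigma$ itself, which by hypothesis is a pure strategy Nash equilibrium there.

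For off-path subgames, the delicate ones are those reached by a unilateral deviation of some player $i$ to $\tau_i\neq \tau^{0}_i$ while the opponents stick to $\tau^{0}_{-i}$. For each such subgame I prescribe a mixed Nash equilibrium $\rho^{*}(\tau_i)$ that attains the min in Definition \ref{def:solo}, i.e.\ the worst-for-$i$ equilibrium of $(\tau_i,\tau^{0}_{-i})(\mathcal{E}(\omega))$. Its existence follows from Proposition \ref{prop:static1} (every subgame has an induced strategic game), Nash's theorem (yielding a non-empty set $NE^{\Delta}$), together with the standard compactness-continuity argument ensuring that the minimum of the continuous function $\rho\mapsto E_i(\rho)$ on the compact set $NE^{\Delta}$ is attained. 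For every remaining subgame (multi-player deviations), I arbitrarily pick any mixed Nash equilibrium of the induced strategic game, again guaranteed by Nash's theorem.

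Subgame perfection of the prescribed strategy then reduces to two checks. \emph{Game-play subgames}: each is satisfied by construction, since a Nash equilibrium is always played. \emph{Pre-play phase}: if player $i$ were to deviate from $\tau^{0}_i$ to some $\tau_i$, the continuation yields expected utility
\[
E^{(\tau_i,\tau^{0}_{-i})(\mathcal{E}(\omega))}_i(\rho^{*}(\tau_i)) \;=\; \min_{\rho \in NE^{\Delta}((\tau_i,\tau^{0}_{-i})(\mathcal{E}(\omega)))} E^{(\tau_i,\tau^{0}_{-i})(\mathcal{E}(\omega))}_i(\rho) \;\leq\; \hat{s}_i,
\]
the last inequality by the sup in Definition \ref{def:solo}. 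Since $u_i(\sigma) \geq \hat{s}_i$ by hypothesis, $i$ cannot strictly improve by deviating. This holds for every $i$, so $(\tau^{0},\sigma)$ is a Nash equilibrium of the two-phase game lying on a subgame perfect path, and $\sigma$ is a surviving equilibrium.

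\textbf{Where the work lies.} The conceptual content is clean and mirrors the classical argument of \cite{JW05}; the subtlety is purely presentational, namely (a) carefully specifying off-path play so that each unilateral pre-play deviation is punished by exactly the solo-payoff-attaining equilibrium, and (b) invoking Proposition \ref{prop:static1} to rewrite each subgame as an ordinary strategic game before applying Nash's theorem. No step requires a genuinely new idea, but one must resist the temptation to also claim the converse: Proposition \ref{prop:nonsurvival} shows that the sufficient condition here is not necessary for $|N|>2$, in contrast to the endogenous-games case, so the argument cannot be strengthened without further hypotheses on the goal structure.
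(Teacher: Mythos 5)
Your proof is correct and follows essentially the same route as the paper's: play $(\tau^{0},\sigma)$ on the equilibrium path, punish any unilateral pre-play deviation by $i$ with the worst-for-$i$ Nash equilibrium of the resulting subgame (so $i$ gets at most $\hat{s}_i \leq u_i(\sigma)$), and play an arbitrary Nash equilibrium after multilateral deviations. The extra detail you supply on existence and attainment of the minimizing equilibrium is a welcome elaboration of steps the paper leaves implicit, not a different argument.
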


\begin{proof}

We need to construct a subgame perfect equilibrium of the two-phase game where $\sigma$ is played. On the equilibrium path let players play the profile $(\tau^{0}, \sigma)$. Off the equilibrium path, if a single player deviates from $\tau^{0}$ pick the worst Nash equilibrium for that player and have that played in the continuation. If more than a player deviates from $\tau^{0}$, play any Nash equilibrium. We can observe that no player $i$ can get more than $\hat{s}_i$ by deviating from $\tau^{0}$, given the choices played in each subgame. Moreover $\sigma$ is a pure strategy Nash equilibrium of $\mathcal{S}$. \end{proof}

What we have shown is that when a Nash equilibrium $\sigma$ gives the players at least their solo payoff then that Nash equilibrium survives. In fact the proof of the proposition shows even more, i.e., that the players will actually obtain at least their solo payoff in that equilibrium.

\paragraph{The case of Boolean Games}

For boolean games the results are even more striking

\begin{proposition}[No survival]\label{prop:nonsurvivalbg}

There exist an EBG $(\mathcal{B},\{B_i\}_{i\in N})$ with $|N|=3$ and $\{v\} = NE(\beta^{0}(\mathcal{B}))$ such that:

\begin{itemize}

%\item $u^{\beta^{0}(\mathcal{B})}_i(v) \geq \hat{s}_i$ for each $i\in N$;
\item $v \models \gamma_i$, for each $i\in N$
\item $v$ is not a surviving equilibrium.

\end{itemize}

\end{proposition}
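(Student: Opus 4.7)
The plan is to construct a concrete three-player endogenous boolean game $(\mathcal{B},\{B_i\}_{i\in N})$ whose unique pure-strategy Nash equilibrium $v$ under $\beta^0$ already satisfies every player's goal, and then to exhibit a pre-play deviation that rules out every subgame perfect equilibrium with $(\beta^0, v)$ on the equilibrium path. The key design idea is to choose $A$'s goal so that it is satisfied at two distinct outcomes $v$ and $v'$ between which $A$'s own cost differs enormously, while $B$'s goal is satisfied at $v$ but not at $v'$. Player $A$ then has an incentive to pay $B$ to move to $v'$; the obstacle is that the budget constraint $b_B(v') = -c_B(v')$ means any positive payment to $B$ at $v'$ violates $B$'s budget and triggers a uniform punishment $|D|\kappa$. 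The construction keeps this punishment manageable by choosing $\mu_B$ small, so that a tiny bribe already flips $B$'s choice.

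Concretely I would take $N = \{A, B, C\}$, one controlled atom $p_i$ per player, goals $\gamma_A = p_A$, $\gamma_B = p_B$, $\gamma_C = p_C$, and fix $v = (\mathbf{tt},\mathbf{tt},\mathbf{tt})$ and $v' = (\mathbf{tt},\mathbf{ff},\mathbf{tt})$. Costs are scaled asymmetrically: $c_A(v)$ is large (say $100$) while $c_A(v') = 0$, and all values of $c_B$ and $c_C$ are bounded by some small constant, which in turn bounds the boost factors $\mu_B,\mu_C$ by that same constant. I would first check that $v$ is the unique pure-strategy Nash equilibrium of $\beta^0(\mathcal{B})$: at $v$ every $\gamma_i$ holds, so a unilateral flip of $p_i$ drops $i$ into a non-goal outcome whose utility is non-positive and hence strictly below $\epsilon + \mu_i - c_i(v)$; at any other profile some player $j$ has $p_j = \mathbf{ff}$, hence $\gamma_j$ unsatisfied, and can restore his goal by flipping to $\mathbf{tt}$, gaining $\epsilon + \mu_j$ and paying at most $\mu_j$ in extra cost, so no other profile is an equilibrium.

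Second I would exhibit a profitable pre-play deviation $\beta^*_A$: $A$ commits to transferring a small amount $X$ to $B$ conditional on outcome $v'$. Only $B$ exceeds his budget, and only at $v'$, so the punishment is $|D|\kappa = X$, applied uniformly to every outcome of the subgame $\beta^*_A(\mathcal{B})$. In the subgame, $B$'s best response to $(p_A = \mathbf{tt}, p_C = \mathbf{tt})$ flips from $\mathbf{tt}$ to $\mathbf{ff}$ as soon as $X > \epsilon + \mu_B - c_B(v) + c_B(v')$, which by construction is a constant of order $\epsilon + \mu_B$; $A$ and $C$ retain $\mathbf{tt}$ as their best responses because flipping would cost them their own goals, and the boost $\epsilon + \mu_A$ respectively $\epsilon + \mu_C$ easily outweighs the cost shifts. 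Hence $v'$ is a Nash equilibrium of $\beta^*_A(\mathcal{B})$, and the SPE can specify $v'$ as continuation after this deviation. Comparing $A$'s utility $\epsilon + \mu_A - c_A(v)$ at $(\beta^0, v)$ with $\epsilon + \mu_A - c_A(v') - 2X$ at the deviation, $A$ strictly improves whenever $c_A(v) - c_A(v') > 2X$; since the lower bound on $X$ is of order $\mu_B + \epsilon$ while the cost gap is tunable, such an $X$ exists and $A$ deviates. No SPE can therefore carry $(\beta^0, v)$ on the equilibrium path, so $v$ is not surviving.

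The main obstacle is that the budget-induced punishment is subtracted from \emph{every} player's payoff in the subgame, including $A$'s, so a naive large bribe would make $A$ worse off on the net. The asymmetry $\mu_B$ small relative to the cost gap $c_A(v) - c_A(v')$ is what guarantees that the minimum-effective bribe, and hence the induced $|D|\kappa$, is tiny compared to $A$'s cost savings at $v'$. The uniqueness of $v$ in the base game and the absence of rival equilibria in the subgame follow routinely from the choice $\gamma_i = p_i$ together with $\mu_i = \max_v c_i(v)$.
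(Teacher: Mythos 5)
Your base-game construction and the verification that $v=(\mathbf{tt},\mathbf{tt},\mathbf{tt})$ is the unique pure Nash equilibrium are fine, but the central step of your deviation argument fails: in this framework no side-payment can flip $B$'s best response from a goal state to a non-goal state. The paper's utility for endogenous boolean games is built precisely to preserve quasi-dichotomy under transfers: the boost factor $\mu_B$ is regret-based (it lies in $\Omega\downarrow$) and is recomputed on the \emph{post-transfer, post-punishment} effective costs, and the budget constraint $b_B(v')=-c_B(v')$ together with the correction $|D|\kappa$ claws back exactly the excess $X$ that $B$ receives at $v'$. Concretely, $u_B(v') = -(c_B(v') - X + \kappa) = -c_B(v') \le 0$ once $\kappa = X$, while $u_B(v) = \epsilon + \mu_B - (c_B(v) + \kappa)$ with $\mu_B \ge c_B(v)+\kappa$, hence $u_B(v) \ge \epsilon > 0$. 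This is the invariant the paper states explicitly right after its Definition of the EBG utility (``$u_i(v)$ is always strictly positive in case $v\models\gamma_i$ and never strictly positive otherwise''), and it holds for every transfer function, so your condition $X > \epsilon + \mu_B - c_B(v) + c_B(v')$ never makes $v'$ a best response for $B$. Your calculation implicitly freezes $\mu_B$ at its pre-transfer value; under that reading the whole point of the budget-constraint machinery (preventing transfers from disrupting quasi-dichotomy) would be defeated, and indeed the paper's own proof of this proposition relies on $\mu_i$ tracking the updated costs.

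The paper's proof uses the opposite mechanism, which is the one that actually exploits $\Omega\downarrow$: it never asks anyone to abandon a goal. All three players have goals $p$, $q$, $r$, costs are symmetric with total cost $30$ concentrated on the all-false outcome, and $v_{p\wedge q\wedge r}$ is the unique NE satisfying everyone. Given any candidate solution transfer $\beta$, some player $i$ gets less than $30+\epsilon$ at $v_{p\wedge q\wedge r}$; that player deviates by committing to pay the others at the all-false outcome so that his own effective cost there becomes $30$, which inflates his regret-based boost $\mu_i$ to $30$ and hence his utility at his goal states to $30+\epsilon$ regardless of the continuation. The profitable pre-play deviation is a player raising his \emph{own} worst-case cost to boost his goal-state utility, not a bribe that overturns another player's goal. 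If you want to salvage your write-up, you need to replace the bribe-to-defect step with a deviation of this kind.
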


\begin{proof}

Let $\mathcal{B}=(\{1,2,3\}, c,\gamma_1, \gamma_2, \gamma_3, \Phi_1, \Phi_2, \Phi_3)$ be an EBG such that $\Phi_1 = \{p\}$, $\Phi_2=\{q\}$, $\Phi_3=\{r\}$, $\gamma_1= p, \gamma_2 = q, \gamma_3 = r$, $c_1 (v_{ \neg p\wedge \neg q \wedge \neg r}) =10 $, $c_1 (v^{\prime}) =0$ for $v^{\prime} \neq v_{\neg p\wedge \neg q \wedge \neg r}$, $c_1(x) =c_2(x) =c_3(x)$ for each $x\in V$. The outcome $v_{p \wedge q \wedge r}$ is clearly a unique pure strategy Nash equilibrium. Suppose now that it is also a surviving equilibrium and that the transfer function $\beta$ is part of the solution. But then there exists some player $i$ for which $u^{\beta(\mathcal{B})}_i(v_{p \wedge q \wedge r}) < 30+\epsilon$. This means that $i$ is better off deviating to a transfer function $(\beta^{\prime}_i,\beta_{-i})$ such that $\beta^{\prime}_i(v_{\neg p \wedge \neg q \wedge \neg r})+ c_i(v_{\neg p \wedge \neg q \wedge \neg r}) = 30$. No matter what equilibrium will be played in the resulting subgame, $i$ will be boosting his payoff at $v_{p\wedge q \wedge r}$ to $30 + \epsilon$. Contradiction. \end{proof}

So not only have we constructed a non-surviving pure strategy Nash equilibrium in the case of more than two players, but also an outcome that is shared joint goal among all players, which was straightforward truth for boost factor profiles $\omega \in \Omega\uparrow$ (Proposition \ref{prop:joint}). As anticipated, with boost factor profiles in $\Omega \downarrow$ players can  increase their cost at non-goal states for the sole reason of increasing their payoff at goal states, which is exactly what happens here. Figure \ref{non-survival:boolean} displays once more this fact.

\begin{figure}[htb]\hspace*{\fill}%
\begin{game}{2}{2}[\yellow{$1$}]
     \> \blue{$L$}   \> $R$\\
\red{$U$} \> { $0,0,0$   }\>{$0,0,0$}\\
$D$   \>{${0,0,0}$}  \>{$0,0,0$}
\end{game}\hspace*{\fill}%
\begin{game}{2}{2}[$2$]
     \> \blue{$L$}   \> $R$\\
\red{$U$} \> { $0,0,0$   }\>$0,0,0$\\
$D$   \>$0,0,0$  \>{$10,10,10$}
\end{game}\hspace*{\fill}%
\caption{A three player game, with the third player (\textcolor{yellow}{Table}) choosing the matrix to be played. \textcolor{red}{Row} wants any outcome consistent with $U$ to be realized, \textcolor{cyan}{Column} any outcome consistent with $L$ and \textcolor{yellow}{Table} any outcome consistent with $1$.  Notice that all players are in control of their own goal satisfaction.  To ease readability we avoid displaying all coalitional colours which are, as expected, \textcolor{purple}{$\{Row, Column\}$},  \textcolor{green}{$\{Table, Column\}$},  \textcolor{brown}{$\{Row, Column, Table\}$},  \textcolor{orange}{$\{Row, Table\}$}. Rather, we only label the strategies corresponding to individual players' goals. The brown outcome $(U,L,1)$ is \textcolor{red}{Row}'s, \textcolor{cyan}{Column}'s and \textcolor{yellow}{Table}'s joint goal and happens to be a dominant strategy equilibrium of the game. However it is not a surviving equilibrium as, no matter what the distribution of payoffs after the pre-play phase looks like, there will always be a player that can increase his cost at outcome $(D,R,2)$ for the sole reason to increase his payoff at outcome $(U,L,1)$.}
\label{non-survival:boolean}
\end{figure}

Sufficient conditions for survival in the case of boolean games are slightly more demanding.

 % Let us first  denote ${\bf s}_i$ the expression $$sup_{\beta_i} (max_{\beta^{nul} \in \prod_{i\in N} B_i} (min_{\sigma \in NE(\beta_i, {\beta^{nul}}_{-i})} E_i(\beta_i, {\beta^{nul}}_{-i}, \sigma)))$$ where $\beta^{nul}_{ -i}$ is a tuple of offers by the opponents of $i$ such that player $i$ receives a net gain of $0$. In other words this expression, which we call {\bf unified solo payoff} encodes the payoff that a single player can guarantee in some transfer when his opponents do not offer anything to him. Notice that for $|N|=2$ this expression boils down to the solo payoff given in Definition \ref{def:solo}. 
Let us call an outcome $v$ {\bf shareable} if we can assign to each $i$ with $v \models \gamma_i$ a unique outcome ${\bf v}^{i}$ such that:

\begin{itemize}

\item for all $j\in N$, $(v_{j},{\bf v}^{i}_{-j})\neq v$. In other words, we focus on outcomes that cannot be reached from $v$ by an individual deviation.

\item for all $v^\prime\in V$, $\sum_{j\in N} c_i(v^{\prime}) \leq \sum_{j\in N} c_j({\bf v}^{i})$. In other words, each player realizing a goal in $v$ is identified with a unique outcome  where the aggregated cost is maximal.

\end{itemize}

We call it moreover {\bf potentially shareable} if there exists a cost function $c^{*}$ such that the outcome $v$ of $(N, \Phi, c^{*}, \{\gamma_i\}_{i\in N}, \{\Phi_i\}_{i\in N})$ is shareable.
EBGs with shareable outcomes display the property we are after.

\begin{proposition}[Survival]\label{prop:survival2}

Let $(\mathcal{B},\{B_i\}_{i\in N})$ be an EBG. A pure strategy Nash equilibrium $v$ of $\mathcal{B}$ survives whenever $v$ is shareable and $u_i(v) \geq \hat{s}_i$ for each $i\in N$.

\end{proposition}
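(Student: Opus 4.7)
The plan is to adapt the construction of the earlier survival result for strategic games with goals (Proposition~\ref{prop:survival2}) to the boolean-games setting. Concretely, I would exhibit a subgame perfect equilibrium of the two-phase game in which the pair $(\beta^{0},v)$ is played on the equilibrium path and, in each off-path subgame reached after a unilateral pre-play deviation by some player $i$, the continuation picks a Nash equilibrium of the updated game that is worst for $i$. Since $v\in NE(\beta^{0}(\mathcal{B}))$ by assumption, the on-path subgame is already rational; what remains is to check that no player can profitably deviate in the pre-play phase.

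For each player $i$ with $v\models\gamma_i$, I would use the shareable assignment ${\bf v}^{i}$ as the outcome the continuation threatens to enforce after a unilateral deviation $\beta'_i$ of player $i$. The non-adjacency clause --- that $(v_{j},{\bf v}^{i}_{-j})\neq v$ for every $j$ --- is what makes this threat credible as a pure-strategy subgame NE: because ${\bf v}^{i}$ and $v$ differ in at least two coordinates, player $i$ cannot unilaterally revert from ${\bf v}^{i}$ back to his goal outcome $v$, so ${\bf v}^{i}$ can be sustained as a NE of the relevant subgame after the usual bookkeeping of the transfers. The maximum-aggregated-cost clause then bounds from above the utility that player $i$ can secure by deviating: any cost-increasing transfer $\beta'_i$ which $i$ uses to inflate his own $\mu_i$ is charged back in full when the continuation moves to ${\bf v}^{i}$, whose aggregated cost dominates that of every alternative outcome.

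With this punishment in place, the argument closes exactly as in the earlier proposition: the best payoff $i$ can guarantee by unilaterally deviating from $\beta^{0}$, given that the continuation plays the worst NE for $i$ in each resulting subgame, is at most the solo payoff $\hat{s}_i$; the hypothesis $u_i(v)\geq\hat{s}_i$ therefore rules out any profitable pre-play deviation, so $(\beta^{0},v)$ lies on the equilibrium path of a subgame perfect equilibrium and $v$ survives.

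The hard part of the plan is precisely the punishment design just described. In the EBG setting the regret-based boost factor $\mu\in\Omega\downarrow$ allows a deviator to inflate his own utility at goal states by enlarging his worst-case cost --- exactly the pathology exhibited in Proposition~\ref{prop:nonsurvivalbg}. Absent shareability, no purely reactive continuation can offset this inflation, since the worst-NE threat is computed inside a subgame whose payoff profile has already been shifted by the deviator's own transfer. Shareability is what lets us select, for each potential deviator, an outcome whose total cost scales with the deviation itself, thereby neutralizing the $\mu$-boost and recovering the solo-payoff argument.
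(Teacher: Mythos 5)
Your overall skeleton (worst-NE punishment off the equilibrium path, with the hypothesis $u_i(v)\geq\hat{s}_i$ closing the argument) matches the paper's, but the way you deploy shareability is not the paper's and, as stated, does not go through. You keep $(\beta^{0},v)$ on the equilibrium path and use ${\bf v}^{i}$ as an off-path \emph{punishment} outcome. Two steps fail. First, nothing makes ${\bf v}^{i}$ a Nash equilibrium of the subgame reached after $i$'s pre-play deviation: the clause $(v_{j},{\bf v}^{i}_{-j})\neq v$ only says that no single player can move between ${\bf v}^{i}$ and $v$; it says nothing about unilateral deviations from ${\bf v}^{i}$ to other outcomes, by $i$ or by anyone else, so the threatened continuation is in general not an equilibrium of that subgame and cannot serve as the ``worst NE for $i$''. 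Second, the claim that a cost-inflating transfer is ``charged back in full when the continuation moves to ${\bf v}^{i}$'' has no mechanism behind it: the deviator inflates $\mu_i$ by transferring at whichever outcome maximizes his own total cost, the benefit is collected at the goal outcome $v$ itself (his utility \emph{there} rises), and the inflated cost is paid only if the inflated outcome is actually played --- which a continuation steered to ${\bf v}^{i}$ does not ensure.

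The paper uses ${\bf v}^{i}$ in the opposite place: \emph{on} the equilibrium path. The played transfer $\beta^{\prime}$ has each goal-satisfied player $j$ pay every other player $k$ the full amount $c_{k}({\bf v}^{j})$ at the designated outcome ${\bf v}^{j}$. By the maximal-aggregated-cost clause this already drives $j$'s total cost at ${\bf v}^{j}$, and hence his regret-based boost $\mu_{j}$, to the largest value attainable in the game, so no further self-inflating deviation of the kind exhibited in Proposition~\ref{prop:nonsurvivalbg} can improve $j$'s utility at $v$; and the non-adjacency clause guarantees that these on-path transfers, which only alter payoffs at an outcome not unilaterally reachable from $v$, leave $v$ a pure strategy Nash equilibrium of $\beta^{\prime}(\mathcal{B})$ (the uniqueness of the assignment $i\mapsto{\bf v}^{i}$ keeps the different players' transfers from interfering with one another). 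The off-path worst-NE threat and the hypothesis $u_{i}(v)\geq\hat{s}_{i}$ then close the argument as in the general case. To salvage your version you would have to either prove that ${\bf v}^{i}$, suitably supported, is an equilibrium of every relevant subgame --- which the definition of shareability does not give you --- or move the transfers at ${\bf v}^{i}$ onto the equilibrium path as the paper does.
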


\begin{proof}

We need to construct a subgame perfect equilibrium of the two-phase game where $v$ is played. On the equilibrium path let players play the profile $(\beta^{\prime}, v)$ where: [1] 
for all $j \not\in \{i\in N \mid v \models \gamma_i\}, v^{\prime} \in V$, we have that, for all $k\neq j$, $\beta^{\prime}_j(v^{\prime},k)=\beta^{0}_{j}(v^{\prime},k)$; [2] for all $j \in \{i\in N \mid v \models \gamma_i\}$ we have that $\beta^{\prime}_j({\bf v}^j,k)= c_k({\bf v}^j) $ and $\beta^{\prime}_j(v^{\prime},k) = \beta_j^0(v^{\prime},k)$, for all $v^{\prime} \neq {\bf v}^{j}$, $k\neq j$. Off the equilibrium path, if a single player deviates from $\beta^{\prime}$ pick the worst Nash equilibrium for that player and have that played in the continuation. If more than a player deviates from $\beta^{\prime}$, play any Nash equilibrium. We can observe that no player $i$ can get more than $\hat{s}_i$ by deviating from $\beta^{\prime}$, given the choices played in each subgame. Moreover $v$ is a pure strategy Nash equilibrium of $\beta^{\prime}(\mathcal{B})$. \end{proof}

To see that there are EBGs where pure strategy Nash equilibria do survive, notice that a large enough game, where all outcomes have zero cost for all players and where there is a common goal, has a shareable outcome and players get at least their solo payoff in the corresponding strategy profile. \footnote{The size requirement is not particularly demanding. Every boolean game with $|\Phi_i| \geq 2$ for each $i$ has a potentially shareable outcome. If all the costs are uniform, that outcome is also shareable.}  %Here is a characterization of shareability.

\section{An integrated framework}\label{sec:integration}

%In this section we study boolean games endowed with transfer functions and with an external taxation mechanism of the type studied by Wooldridge et al. and analyzed in Proposition \ref{prop:reduction}.

If we take the point of view of an external authority that would like a certain outcome of an EBG to be rationally chosen by players, the question is not only whether that outcome can be turned into a Nash equilibrium, but also whether that Nash equilibrium is bound to survive.
The purpose of this section is to generate taxation mechanisms that guarantee equilibria to survive when players play rational transfers to one another and play rationally in the game that is updated both with the transfers made and with the taxation mechanism.

The procedure we present  takes an outcome that is consistent with players' goals and turns it into a surviving Nash equilibrium by appropriately employing taxation mechanisms. %From the results we have obtained in the previous section we know that this can happen only in a certain class of games. Let us call an outcome $v$ of a boolean game $\mathcal{B}=(N, \Phi, c, \{\gamma_i\}_{i\in N}, \{\Phi_i\}_{i\in N})$ {\bf potentially shareable} if there exists a cost function $c^{*}$ such that the outcome $v$ of $(N, \Phi, c^{*}, \{\gamma_i\}_{i\in N}, \{\Phi_i\}_{i\in N})$ is shareable.

\begin{algorithm}\textcolor{white}{steps}\label{alg:algo}
\begin{description}

\item[Input] An outcome $\sigma$ of an instantiated strategic game with goals $(\mathcal{S},\{G_i\}_{i\in N})(\omega)$ with $\mathcal{S} = (N, \{\Sigma_i\}_{i\in N}, \pi)$ and such that for each $i\in N$ either $\sigma \in G_i$ or for no $\sigma^{\prime}_i$ we have that $(\sigma_{-i}, \sigma^{\prime}_i) \in G_i$.
\item[Output] A taxation mechanism $\alpha$ on $\mathcal{S}$.

\item[Steps]  \textcolor{white}{ciao}
\begin{enumerate}

\item {\bf Let} $\alpha_i(\sigma^{\prime}) = 0$,  for each $i\in N, \sigma^{\prime}\in \Sigma$;
\item {\bf While} for some $i\in N$ we have that \\ $u^{(\alpha(\mathcal{S}),\{G_i\}_{i\in N})(\omega)}_i(\sigma) < \hat{s}^{((\alpha(\mathcal{S}), \{G_i\}_{i\in N}),\{T_i\}_{i\in N})(\omega)}_i$,\\  {\bf do} $\alpha_i(\sigma^{\prime}):=\alpha_i(\sigma^{\prime})+1$, for each $\sigma^{\prime}\neq \sigma \in \Sigma$;  
\item {\bf Return} $\alpha$.

\end{enumerate}

\end{description}

\end{algorithm}

\begin{proposition}[Survival by taxation]

 Let  $\sigma$ be an outcome of an instantiated strategic game with goals $(\mathcal{S},\{G_i\}_{i\in N})(\omega)$ with $\mathcal{S} = (N, \{\Sigma_i\}_{i\in N}, \pi)$ and such that for each $i\in N$ either $\sigma \in G_i$ or for no $\sigma^{\prime}_i$ we have that $(\sigma_{-i}, \sigma^{\prime}_i) \in G_i$.
 There exists a taxation mechanism $\alpha$ such that $\sigma$ is a surviving equilibrium of  $((\alpha(\mathcal{S}), \{G_i\}_{i\in N}),\{T_i\}_{i\in N})(\omega)$.
\end{proposition}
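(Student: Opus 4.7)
The plan is to read the proposition as a correctness statement for Algorithm~\ref{alg:algo}. I would show that the \textbf{while} loop terminates after finitely many iterations, and that its exit condition is exactly what Proposition~\ref{prop:survival2} needs in order to conclude that $\sigma$ is a surviving equilibrium of the endogenous game built over $\alpha(\mathcal{S})$. The ``exit implies surviving'' direction is almost immediate: the exit condition $u_i(\sigma) \geq \hat{s}_i$ for every $i$ forces $\sigma$ to be a pure Nash equilibrium of $\alpha(\mathcal{S})$, since any profitable unilateral deviation for $i$ could be witnessed in the definition of $\hat{s}_i$ by choosing $\tau_i = \tau^0_i$ and letting $i$ best-respond in the resulting subgame, contradicting $u_i(\sigma) \geq \hat{s}_i$. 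The input assumption on $\sigma$ is precisely what keeps this requirement from being vacuous: without it, property~(2) of the boost factors would force $u_i(\sigma) < u_i(\sigma_{-i},\sigma'_i)$ for some $\sigma'_i$, and no amount of taxation on outcomes outside $\sigma$ could repair this.

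For termination, fix a player $i$ that the loop ever selects and track how $u_i(\sigma)$ and $\hat{s}_i$ evolve as the taxes $\alpha_i(\sigma')$ on the remaining outcomes accumulate. Since $\alpha_i(\sigma) = 0$ is preserved throughout, $u_i(\sigma)$ is unchanged when $\sigma \not\in G_i$, and otherwise varies only through the $\Omega\uparrow$-dependence of $\omega_i$, remaining at all times strictly above the maximum non-goal payoff of $i$ by property~(2). Meanwhile, for every $\sigma' \neq \sigma$ the quantity $\pi_i(\sigma') - \alpha_i(\sigma')$ tends to $-\infty$, and the strict monotonicity of $\omega_i$ drags $u_i(\sigma')$ to $-\infty$ whether or not $\sigma' \in G_i$.

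To bound $\hat{s}_i$ from above I would then use the observation that, for any transfer $\tau_i$ and any Nash equilibrium $\rho$ of the subgame $(\tau_i,\tau^0_{-i})(\alpha(\mathcal{S}))$, one has $E_i(\rho) \leq \max_{\sigma'} u_i^{\tau_i(\alpha(\mathcal{S}))}(\sigma')$. Non-negativity of transfers yields $u_i^{\tau_i(\alpha(\mathcal{S}))}(\sigma) \leq u_i(\sigma)$ and, for $\sigma' \neq \sigma$, $u_i^{\tau_i(\alpha(\mathcal{S}))}(\sigma') \leq u_i(\sigma')$, which is arbitrarily negative by the previous paragraph. Hence after finitely many iterations $\hat{s}_i \leq u_i(\sigma)$, and the exit condition for $i$ holds. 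Since raising $\alpha_i$ only tightens the corresponding bound for every other player (their own $\hat{s}_j$ can only decrease, while $u_j(\sigma)$ is unaffected by $\alpha_i$), the at most $|N|$ players ever in violation are processed in a finite total number of steps, so the loop terminates and Proposition~\ref{prop:survival2} yields the claim.

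The main obstacle I expect is the boost-factor bookkeeping in the $\sigma \in G_i$ case: both $u_i(\sigma)$ and $\hat{s}_i$ drift simultaneously under the $\Omega\uparrow$ rule, and one must ensure that $u_i(\sigma)$ does not collapse faster than $\hat{s}_i$. The argument above sidesteps this by comparing both quantities to the \emph{same} upper bound --- the pointwise maximum of $i$'s utilities in the taxed subgame --- and exploiting the asymmetry that $\alpha_i(\sigma)$ stays equal to zero while all other $\alpha_i(\sigma')$ grow without bound.
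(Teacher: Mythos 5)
Your overall plan --- termination of the \textbf{while} loop plus a reduction to Proposition~\ref{prop:survival2} --- is exactly the route the paper takes (its own proof is a single sentence asserting that steps 1 and 2 make $u_i(\sigma)$ eventually reach $\hat{s}_i$), but the specific justification you give for the ``exit implies surviving'' direction does not work. You claim that $u_i(\sigma)\geq\hat{s}_i$ for all $i$ already forces $\sigma$ to be a pure Nash equilibrium of $\alpha(\mathcal{S})$, ``by choosing $\tau_i=\tau^0_i$ and letting $i$ best-respond in the resulting subgame''. But in Definition~\ref{def:solo} the inner operator is a \emph{minimum over Nash equilibria of the subgame}, not a best response against $\sigma_{-i}$: with $\tau_i=\tau^0_i$ the quantity being bounded is the worst equilibrium payoff of the untransferred game, which says nothing about whether $i$ has a profitable unilateral deviation from $\sigma$. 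A game in which $\sigma$ pays $i$ strictly more than the worst Nash equilibrium but strictly less than his best response to $\sigma_{-i}$ satisfies your exit condition while $\sigma$ fails to be an equilibrium; then Proposition~\ref{prop:survival2} (whose hypothesis is that $\sigma$ \emph{is} a pure Nash equilibrium) cannot be invoked, and $\sigma$ cannot be a surviving equilibrium by definition. The equilibrium property has to be argued directly from the taxation: every $\sigma'\neq\sigma$ is taxed while $\sigma$ is not, and the input hypothesis guarantees that no unilateral deviation can switch a goal of $i$ on, so once the taxes are large enough every deviation is strictly losing. Note that this is a property of a \emph{sufficiently punitive} $\alpha$, which the loop's exit condition alone does not certify, since the loop monitors only the solo-payoff inequality.

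Your termination argument also has a hole in the case $\sigma\notin G_i$ but $G_i\neq\emptyset$. There $u_i(\sigma)=\pi_i(\sigma)$ is a non-goal payoff that the algorithm never taxes, so by property (2) of boost factors every goal state of $i$ retains utility strictly above $u_i(\sigma)$ no matter how heavily it is taxed; your upper bound $\max_{\sigma'}u_i^{\tau_i(\alpha(\mathcal{S}))}(\sigma')$ therefore never drops to $u_i(\sigma)$, and since transfers are unbounded, $i$ may be able to keep $\hat{s}_i$ above $u_i(\sigma)$ by bribing the opponents into one of his goal states while the boost keeps that state attractive for him. To be fair, the paper's one-line proof is silent on both of these points, so your attempt is faithful to its spirit; but the two places where you add substance are precisely the places where the argument, as written, breaks.
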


\begin{proof}

To see that Algorithm \ref{alg:algo} guarantees this fact we only need to observe that the construction of $\alpha$ at step 1 and step 2 ensures that the payoff $u^{(\alpha(\mathcal{S}), \{G_i\}_{i \in N})(\omega)}_i(\sigma)$ will eventually reach $\hat{s}^{((\alpha(\mathcal{S}), \{G_i\}_{i \in N}), \{T_i\}_{i \in N})(\omega)}_i$. \end{proof}

For the case of boolean games we need slightly more reasoning effort, showing that the property of shareability is preserved.

\begin{proposition}[Survival by taxation (Boolean Games)]

Let $v$ be potentially shareable outcome of a boolean game $\mathcal{B}=(N, \Phi, c, \{\gamma_i\}_{i\in N}, \{\Phi_i\}_{i\in N})$ such that for each i either $v \models \gamma_i$ or for no $v^{\prime}_i$ we have that $(v_{-i}, v^{\prime}_i) \models \gamma_i$. There exists a taxation mechanism $\alpha$ such that $v$ is a surviving equilibrium of the EBG $(\alpha(\mathcal{B}),\{B_i\}_{i\in N})$.
\end{proposition}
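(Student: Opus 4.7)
The plan is to reduce the claim to Proposition \ref{prop:survival2} for boolean games by constructing the taxation mechanism $\alpha$ in two successive stages: first to realise shareability, then to secure the solo-payoff condition, much as in the preceding proposition's algorithm.

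First, since $v$ is only \emph{potentially} shareable, fix a witness cost function $c^{*}$ such that $v$ is shareable in $(N, \Phi, c^{*}, \{\gamma_i\}_{i\in N}, \{\Phi_i\}_{i\in N})$. Shareability is a conjunction of structural conditions on the profile $\{{\bf v}^{i}\}$ and of inequalities on sums of costs, both invariant under adding a common positive constant to all costs. I may therefore assume without loss of generality that $c^{*}_i(v') \geq c_i(v')$ for every $i \in N$ and $v' \in V$, and then define an initial taxation mechanism by $\alpha^{0}_i(v') := c^{*}_i(v') - c_i(v')$. The effective cost function of $\alpha^{0}(\mathcal{B})$ is exactly $c^{*}$, so $v$ is actually shareable in $\alpha^{0}(\mathcal{B})$.

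Second, starting from $\alpha^{0}$, I apply the iterative step used in the previous taxation result: while some player $i$ has $u_i(v) < \hat{s}_i$, increment $\alpha_i(v')$ by $1$ for every $v' \neq v$ (and every $i$). The key observation, to be checked explicitly, is that such uniform increments preserve shareability in $\alpha(\mathcal{B})$. For each $i$ with $v \models \gamma_i$ and each $v' \in V$, the inequality $\sum_{j \in N} c_j(v') \leq \sum_{j \in N} c_j({\bf v}^{i})$ witnessing shareability is preserved: if $v' \neq v$, both sides gain exactly $|N|$; if $v' = v$, the left side is unchanged while the right side strictly increases, using that the first shareability clause forces ${\bf v}^{i} \neq v$. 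The combinatorial clause on $\{{\bf v}^{i}\}$ is cost-independent and therefore untouched. Termination is argued exactly as before: utilities at outcomes distinct from $v$ drift downward while $u_i(v)$ is fixed, so the opponents' Nash continuations in the solo computation degrade until $\hat{s}_i$ falls below $u_i(v)$ for every $i$.

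Having produced $\alpha$ for which $v$ is a shareable outcome of $\alpha(\mathcal{B})$ and $u_i(v) \geq \hat{s}_i$ for every $i$, Proposition \ref{prop:survival2} (the boolean-games survival result) immediately yields that $v$ is a surviving equilibrium of $(\alpha(\mathcal{B}), \{B_i\}_{i\in N})$. The main obstacle is the verification that the iterative phase does not destroy shareability; this hinges entirely on the uniformity of the increment across $\Sigma \setminus \{v\}$ together with ${\bf v}^{i} \neq v$. A minor subtlety is that the regret-based boost factor $\mu_i$ shifts as costs shift, but since $\mu_i$ enters utilities only additively at goal-satisfying outcomes and cancels in the differential comparisons that the shareable construction in Proposition \ref{prop:survival2} exploits, this does not affect the argument.
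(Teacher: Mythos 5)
Your proof follows essentially the same route as the paper's: run the taxation algorithm with the initial taxes chosen to realise the witness cost function $c^{*}$ (so that $v$ becomes genuinely shareable in $\alpha^{0}(\mathcal{B})$), then iterate the uniform increment on outcomes other than $v$ until every player's utility at $v$ meets the solo payoff, and conclude via the boolean-games survival proposition. You are in fact more explicit than the paper's two-line argument, notably in normalising $c^{*}$ so the initial taxes are nonnegative and in checking that the uniform increments preserve shareability because ${\bf v}^{i}\neq v$.
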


\begin{proof}
To see that Algorithm \ref{alg:algo} guarantees this fact we only need to observe that the construction of $\alpha$ at step 1 ensures that the outcome $v$ of game $\alpha(\mathcal{B})$ is shareable, and the update of $\alpha$ at step 2 ensures that the payoff $u^{\alpha(\mathcal{B})}_i(v)$ will eventually reach $\hat{s}^{(\alpha(\mathcal{B}),\{B_i\}_{i\in N})}_i$ while keeping $v$ shareable. \end{proof}
\section{Variants: lexicographic preferences over mixed strategies}\label{sec:lexico}

As already argued in the introductory section, the definition of expected utility is ill-formed when utilities range over extended reals, i.e., the set $\mathbb{R} \cup \infty$, due to the existence of strategy profiles associated with infinite utility.

In this section we present a fairly reasonable notion of preference over mixed strategies that takes the infinitary nature of goal states into account, and argue, consistently with many results already available in the literature, that to analyze this type of preferences we need to go beyond the realm of normal form games.

\begin{figure}[htb]\hspace*{\fill}%
\begin{game}{2}{2}
     \> $L$   \> $R$\\
$U$ \> {\red{$0,0$}}\>\blue{{$0,0$}}\\
$D$   \> \blue{$1,0$}   \>\red{$0,0$}
\end{game}\hspace*{\fill}%
\caption{Lexicographic preference over mixed strategies and no mixed strategy Nash equilibrium}
\label{noNash}
\end{figure}

\begin{definition}[Lexicographic prefence over mixed strategies]\label{lexico}

Let $(\mathcal{S}, \{G_i\})$ be a strategic game with goals and let $G_i(\delta)$ the probability of mixed strategy profile $\delta$ assigned to profiles in $G_i$. The lexicographic preference $\leq^{LEX_{\Delta}}$ over mixed strategies, with strict and reverse counterparts denoted as usual, is defined as follows, for any two mixed strategy profiles $\delta, \delta^{\prime} \in \Delta$ available at $\mathcal{S}$:

$$\delta \leq^{LEX_{\Delta}}_i \delta^{\prime} \mbox { if and only if } (G_i(\delta) < G_i (\delta^{\prime}) \mbox{ OR } ( G_i(\delta) = G_i (\delta^{\prime}) \mbox{ AND }  E_i(\delta) \leq E_i(\delta^{\prime}))) $$

\end{definition}

In words a mixed strategy profile $\delta$ is better for $i$ than a mixed strategy profile $\delta^{\prime}$ if:

\begin{itemize} 

\item either the probability of reaching a $i$'s goal state in $\delta$ is higher than that of $\delta^{\prime}$;

\item the probability of reaching $i$'s  goal state is the same in $\delta$ and $\delta^{\prime}$ but the expected utility for $i$ in $\delta$ is higher than the one in $\delta^{\prime}$.

\end{itemize}

Notice that the newly defined preference is actually a total preorder.

\begin{proposition}[Properties]

$\leq^{LEX_{\Delta}}$ is transitive and complete.

\end{proposition}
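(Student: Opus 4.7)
The plan is to observe that $\leq^{LEX_{\Delta}}_i$ is simply the lexicographic product of two standard total preorders on the reals, namely $\leq$ applied to $G_i(\cdot)$ in the first coordinate and $\leq$ applied to $E_i(\cdot)$ in the second coordinate. Both completeness and transitivity then follow from the corresponding properties of $\leq$ on $\mathbb{R}$, via a short case analysis. Concretely, I would first reformulate the definition by saying that $\delta \leq^{LEX_{\Delta}}_i \delta'$ iff the pair $(G_i(\delta), E_i(\delta)) \leq^{LEX} (G_i(\delta'), E_i(\delta'))$, where $\leq^{LEX}$ is the usual lexicographic order on $\mathbb{R}^2$.

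For completeness, I would fix arbitrary $\delta, \delta' \in \Delta$ and split on whether $G_i(\delta)$ and $G_i(\delta')$ are comparable via $<$, $>$, or $=$ (using trichotomy on $\mathbb{R}$). In the first case $\delta \leq^{LEX_{\Delta}}_i \delta'$ holds by the first disjunct of the definition; in the second case $\delta' \leq^{LEX_{\Delta}}_i \delta$ holds symmetrically; in the third case $G_i(\delta) = G_i(\delta')$ and completeness reduces to comparing $E_i(\delta)$ and $E_i(\delta')$, which are real numbers and hence comparable.

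For transitivity, I would assume $\delta \leq^{LEX_{\Delta}}_i \delta'$ and $\delta' \leq^{LEX_{\Delta}}_i \delta''$ and distinguish cases according to which disjunct of the definition applies in each hypothesis, giving at most four subcases. If $G_i(\delta) < G_i(\delta')$, then combined with $G_i(\delta') \leq G_i(\delta'')$ (which holds in both disjuncts of the second hypothesis), we get $G_i(\delta) < G_i(\delta'')$ by transitivity of $<$ on $\mathbb{R}$, hence $\delta \leq^{LEX_{\Delta}}_i \delta''$. If instead $G_i(\delta) = G_i(\delta')$ and $E_i(\delta) \leq E_i(\delta')$, then either $G_i(\delta') < G_i(\delta'')$, in which case $G_i(\delta) < G_i(\delta'')$ and we are done, or $G_i(\delta') = G_i(\delta'')$ with $E_i(\delta') \leq E_i(\delta'')$, in which case $G_i(\delta) = G_i(\delta'')$ and, by transitivity of $\leq$ on $\mathbb{R}$, $E_i(\delta) \leq E_i(\delta'')$, so again $\delta \leq^{LEX_{\Delta}}_i \delta''$.

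The proof is essentially routine, and there is no real obstacle: the only thing to be careful about is not to forget any of the subcases in the transitivity argument, and to rely cleanly on the trichotomy/transitivity of the standard order on $\mathbb{R}$ rather than reproving it implicitly. No properties of the specific functions $G_i$ and $E_i$ beyond the fact that they take values in $\mathbb{R}$ are needed, which is why the lexicographic-product viewpoint makes the argument so short.
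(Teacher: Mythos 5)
Your proof is correct and follows exactly the route the paper intends: the paper's own proof is simply the one-line remark that the claim is a ``straightforward consequence of the definition,'' and your identification of $\leq^{LEX_{\Delta}}_i$ with the lexicographic product of the standard order on $\mathbb{R}$ applied to $(G_i(\cdot),E_i(\cdot))$, followed by the routine trichotomy and case analysis, is the intended elaboration. Nothing is missing.
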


\begin{proof}
Straightforward consequence of the definition.

\end{proof}

Nevertheless it cannot be represented by a preference relation in a strategic game (in the sense of \cite{Rubinstein}), as the following result shows.

\begin{proposition}

There exists a strategic game with goals $(\mathcal{S}, \{G_i\}_{i\in N})$ with induced lexicographic preference relation over mixed strategies $\leq^{LEX_{\Delta}}$ that is not representable by a strategic game.

\end{proposition}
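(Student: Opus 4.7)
The plan is to exhibit a small strategic game with goals whose induced lexicographic preference $\leq^{LEX_{\Delta}}$ fails the Archimedean continuity axiom that every expected-utility ordering automatically satisfies, thereby precluding representation by any strategic game. The impossibility is classical in spirit (cf.\ \cite{Rubinstein}) and transfers cleanly once we phrase it in terms of mixed strategy profiles.

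Concretely, I would take a game with a single nontrivial player $i$ possessing three pure strategies $a, b, c$ (the other players being ``dummies'' with a singleton strategy set, so $\Sigma$ is effectively $\{a,b,c\}$), set $G_i = \{a\}$, and define $\pi_i(a) = 0$, $\pi_i(b) = 1$, $\pi_i(c) = 0$. Applying Definition \ref{lexico}: since $G_i(a)=1 > 0 = G_i(b)$ we obtain the strict preference $a \succ^{LEX_{\Delta}}_i b$; and since $G_i(b) = G_i(c) = 0$ while $E_i(b) = 1 > 0 = E_i(c)$, we obtain $b \succ^{LEX_{\Delta}}_i c$. The decisive observation is that for every $\beta \in (0,1]$ the mixed strategy $\delta_\beta := \beta\, a + (1-\beta)\, c$ has $G_i(\delta_\beta) = \beta > 0 = G_i(b)$, whence $\delta_\beta \succ^{LEX_{\Delta}}_i b$ no matter how small $\beta$ is chosen.

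Suppose for contradiction that $\leq^{LEX_{\Delta}}$ is induced by some strategic game $\mathcal{S}' = (N, \{\Sigma_i\}_{i \in N}, \pi')$ via expected value of $\pi'$, so that for all mixed profiles $\delta, \delta'$ one has $\delta \leq^{LEX_{\Delta}}_i \delta' \iff E_i^{\mathcal{S}'}(\delta) \leq E_i^{\mathcal{S}'}(\delta')$. From $\delta_\beta \succ^{LEX_{\Delta}}_i b$ for all $\beta > 0$ we deduce
$$\beta\, \pi'_i(a) + (1-\beta)\, \pi'_i(c) > \pi'_i(b) \quad \text{for every } \beta > 0,$$
and letting $\beta \to 0^+$ yields $\pi'_i(c) \geq \pi'_i(b)$. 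But $b \succ^{LEX_{\Delta}}_i c$ forces $\pi'_i(b) > \pi'_i(c)$, a contradiction.

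The only real obstacle is pinning down the precise meaning of ``representable by a strategic game'': namely, that the preference over mixed strategies in a strategic game is given by the expected value of its (real-valued) payoff function, as already used when defining expected utility earlier in the paper. Once this is granted, the argument is essentially the classical non-Archimedean diagnosis of lexicographic orderings repackaged in the present vocabulary, and no further technical machinery is needed.
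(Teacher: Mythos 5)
Your argument is correct under the reading of ``representable by a strategic game'' that you explicitly adopt --- namely that the induced preference over mixed profiles must coincide with the ordering by expected value of some real-valued payoff function --- but it reaches the conclusion by a genuinely different route from the paper. The paper exhibits a $2\times 2$ two-player game (with $G_{\text{Row}}$ and $G_{\text{Column}}$ arranged antagonistically) and shows by case analysis on $\delta^{*}(L)$ versus $\delta^{*}(R)$ that \emph{no mixed Nash equilibrium exists} under $\leq^{LEX_{\Delta}}$; non-representability then follows from Nash's theorem. You instead attack the representation directly: your one-nontrivial-player example with $G_i=\{a\}$, $\pi_i(a)=\pi_i(c)=0$, $\pi_i(b)=1$ shows that $\beta a+(1-\beta)c \succ b$ for all $\beta>0$ while $b\succ c$, which is exactly the failure of the Archimedean/continuity property that any expected-payoff ordering satisfies; the limit $\beta\to 0^{+}$ gives the contradiction cleanly. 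Your route is more elementary (no appeal to Nash's theorem) and pinpoints the axiom that fails. What the paper's route buys is robustness to the (undefined) notion of representability: its example has no equilibrium at all, so it rules out even a weaker representation that merely preserves best responses or the set of Nash equilibria, whereas your game \emph{does} have a unique optimal strategy (the pure goal state $a$), which is reproduced by, e.g., $\pi'(a)=1,\pi'(b)=0,\pi'(c)=-1$; so if ``representable'' were read only as equilibrium-preserving, your example would not suffice. Since you flag this interpretive issue yourself and the expected-utility reading is the natural one ``in the sense of Osborne--Rubinstein,'' I consider the proof sound, but you should state the representability definition explicitly before running the limit argument.
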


\begin{proof}

Consider the strategic game with goals displayed in Figure  \ref{noNash} and consider the relation $ \leq^{LEX_{\Delta}}_i$ constructed following Definition \ref{lexico}. 
Assume, for the sake of contradiction, that there is a Nash equilibrium profile $\delta^{*}$. Consider now the following cases

\begin{itemize}

\item $\delta^{*}(R) > \delta^{*}(L)$, i.e., the probability for Column to play $R$ is higher than to play $L$. But then 
we must have that $\delta^{*}(D) > \delta^{*}(U)$ as $(R,D)>^{LEX_{\Delta}}_{Row} (R,U)$. But if this is the case then 
$\delta^{*}(L) > \delta^{*}(R)$ as $(L,D)>^{LEX_{\Delta}}_{Column} (R,D)$. Contradiction.

\item $\delta^{*}(L) > \delta^{*}(R)$, similar reasoning. Contradiction.

\item $\delta^{*}(L) = \delta^{*}(R)$. But then 
we must have that $\delta^{*}(D) > \delta^{*}(U)$ as $((0.5 L; 0.5 R),U)<^{LEX_{\Delta}}_{Row} ((0.5 L; 0.5 R),D)$. But then again $\delta^{*}$ cannot be an equilibrium unless $\delta^{*}(L) < \delta^{*}(R)$. Contradiction.

\end{itemize}

In conclusion, $\delta^{*}$ is not a Nash equilibrium. Contradiction. Now the fact that $(\mathcal{S}, \{G_i\}_{i\in N})$ with induced lexicographic preference relation over mixed strategies $\leq^{LEX_{\Delta}}_i$ is not representable by a strategic game follows from Nash's theorem \cite{nash}.
\end{proof}

The proposition shows that the adoption of the extremely natural preference relation $ \leq^{LEX_{\Delta}}_i$ induces structures that are not amenable to standard equilibrium analysis.

We consider this a double-edged result. On the one hand, $ \leq^{LEX_{\Delta}}_i$ seems to us an extremely natural preference relation that is worth considering, even at the cost of sacrificing standard game-theoretic analysis. 
On the other hand, the fact that $ \leq^{LEX_{\Delta}}_i$ is not in general analyzable within the framework of normal form games does not enable a connection and a comparison with relevant results in the field, such as Jackson and Wilkie's endogenous games, which rely on the existence of Nash equilibria in their two-phase game, and a whole host of theoretical effort devoted to the study of pre-play negotiations and similar setups \cite{EP11,kal1,gut1, gut2,var1,var2, farrell,maskin}.

The following results show that  $ \leq^{LEX_{\Delta}}_i$ is otherwise rather well-behaved.

\begin{proposition}

Let $IESDS(\mathcal{S}, \{G_i\}_{i\in N})$ be the procedure of iterated elimination of strictly dominated strategies applied to $(\mathcal{S}, \{G_i\}_{i\in N})$ using the preference relation $ \leq^{LEX_{\Delta}}_i$ and $IEWDS(\mathcal{S}, \{G_i\}_{i\in N})$ its weak variant.

\begin{itemize}

\item If $(\mathcal{S}^{\prime}, \{G^{\prime}_i\}_{i\in N})$ is an outcome of $IESDS(\mathcal{S}, \{G_i\}_{i\in N})$, then $\delta$ is Nash equilibrium of $(\mathcal{S}, \{G_i\}_{i\in N})$ if and only if it is a Nash equilibrium of 
$(\mathcal{S}^{\prime}, \{G^{\prime}_i\}_{i\in N})$;

\item If $(\mathcal{S}, \{G_i\}_{i\in N})$ is solved by $IESDS$ then the resulting joint strategy is a unique Nash equilibrium of $(\mathcal{S}, \{G_i\}_{i\in N})$;

\item If $(\mathcal{S}^{\prime}, \{G^{\prime}_i\}_{i\in N})$ is an outcome of $IEWDS(\mathcal{S}, \{G_i\}_{i\in N})$, and $\delta$ is Nash equilibrium of $(\mathcal{S}^{\prime}, \{G^{\prime}_i\}_{i\in N})$ then it is a Nash equilibrium of 
$(\mathcal{S}, \{G_i\}_{i\in N})$;

\item If $(\mathcal{S}, \{G_i\}_{i\in N})$ is solved by $IESDS$ then the resulting joint strategy is a Nash equilibrium of $(\mathcal{S}, \{G_i\}_{i\in N})$.
\end{itemize}

\end{proposition}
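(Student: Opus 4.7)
The plan is to adapt the classical textbook arguments for IESDS and IEWDS to the lexicographic setting, exploiting the crucial fact that both components of $\leq^{LEX_{\Delta}}_i$ --- the goal-hit probability $G_i(\delta)$ and the expected utility $E_i(\delta)$ --- are \emph{linear} functions of player $i$'s own mixing weights when the opponents' profile $\delta_{-i}$ is held fixed. This linearity substitutes for the usual von Neumann--Morgenstern reasoning and makes all the standard dominance manipulations go through, even though (as the previous proposition shows) $\leq^{LEX_{\Delta}}_i$ is not representable by a payoff function.

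First I would establish a linearity lemma: fix $\delta_{-i}$ and suppose $(\sigma_i, \delta_{-i}) <^{LEX_{\Delta}}_i (\sigma'_i, \delta_{-i})$; then for any mixed $\tilde{\delta}_i$ and any $\lambda \in (0,1]$, shifting mass $\lambda$ from $\sigma_i$ to $\sigma'_i$ inside the convex combination $(1-\lambda)\tilde{\delta}_i + \lambda \sigma_i$ strictly improves $i$'s position in $\leq^{LEX_{\Delta}}_i$. This is immediate from linearity: the change in $G_i$ (resp.\ $E_i$) equals $\lambda$ times the difference $G_i(\sigma'_i,\delta_{-i})-G_i(\sigma_i,\delta_{-i})$ (resp.\ the analogous expected-utility difference), so the lexicographic inequality is preserved in the same ``coordinate'' where the original strict inequality held. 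The analogue for weak dominance gives only weak improvement, with strictness retained whenever the dominating strategy was strictly better at some $\delta_{-i}$.

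Using the strict-dominance version I would prove the first item in both directions. For the forward direction, suppose $\delta$ is a Nash equilibrium of $(\mathcal{S},\{G_i\}_{i\in N})$ and that some $\sigma_i \in \mathrm{supp}(\delta_i)$ were strictly dominated by $\sigma'_i$; the linearity lemma with $\lambda = \delta_i(\sigma_i)$ produces a strictly better deviation, contradicting equilibrium. So no strictly dominated strategy appears in the support of $\delta$, and inductively $\delta$ survives every round of IESDS; it remains a Nash equilibrium of the reduced game since the set of available deviations only shrinks. For the converse, suppose $\delta$ is a Nash equilibrium of $(\mathcal{S}', \{G'_i\}_{i\in N})$ and consider a hypothetical improving deviation $\delta^*_i$ in the original game. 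If $\delta^*_i$ puts mass on an eliminated $\sigma^e_i$, replace that mass with the strategy that dominated $\sigma^e_i$ at its elimination round (using the linearity lemma to guarantee this substitution is weakly improving and strictly improving over $\sigma^e_i$); iterating across all eliminated strategies yields an improving deviation using only surviving strategies, contradicting the NE property in the reduced game. The second item follows at once: any other Nash equilibrium would also survive IESDS, contradicting that the procedure has singled out a unique profile.

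For the IEWDS items I would re-run the reverse-direction argument with the weak-dominance analogue of the linearity lemma. The substitution of weakly dominated strategies now gives only weakly better deviations, but this is enough: if a strict improvement existed in the original game, after the substitutions it would remain a (possibly weak) improvement using only surviving strategies; chasing the strict inequality through the linear combinations shows it survives as a strict improvement somewhere, contradicting the NE property of the reduced game. The fourth item is then the special case where the reduction terminates in a single profile. The main obstacle will be the third item: unlike the strict case, weak dominance does not preserve strict improvements under arbitrary mixing, so I need to be careful to track where the strict component of the lexicographic comparison ``lives'' (the $G_i$-coordinate or the $E_i$-coordinate) through successive substitutions, and to argue that at least one of these strictnesses is always inherited by the final deviation that lies entirely in the reduced strategy set.
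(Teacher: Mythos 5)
Your proposal is correct and is exactly the route the paper intends: the paper's ``proof'' is a one-line deferral to the standard arguments in \cite{apt}, and your linearity lemma is precisely the observation needed to push those arguments through for the two-level order $\leq^{LEX_{\Delta}}_i$, since both coordinates $G_i$ and $E_i$ are affine in $i$'s own mixing weights. One simplification: the difficulty you anticipate for the third item dissolves, because the substitutions only need to yield a deviation \emph{weakly} better than $\delta^*_i$ --- the strictness is already present in the comparison with $\delta_i$ and is carried through by transitivity of the total preorder ($a <^{LEX_{\Delta}}_i b$ and $b \leq^{LEX_{\Delta}}_i c$ give $a <^{LEX_{\Delta}}_i c$), so no coordinate-tracking across successive substitutions is required.
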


\begin{proof}

The proof follows the standard procedure. The reader is referred to  \cite{apt} for more detail.

\end{proof}

\section{Conclusion}\label{sec:conclusion}

We have studied strategic games where players are endowed with designated goal states and with the possibility of offering side-payments to their fellow players in order to influence their decision-making. The perspective we have taken integrates the framework of strategic games with goals, a generalization of the boolean games studied in artificial intelligence, with that of endogenous games with side-payment,s studied in game theory. We have seen that the resulting games display specific properties that make them worth studying in their own sake (Propositions \ref{prop:static1}, \ref{prop:static2}, \ref{prop:static3} and \ref{prop:dynamic}) and the classical results available on Nash equilibria survival do not generalize (Proposition \ref{prop:nonsurvival}). We have however provided sufficient conditions that Nash equilibria need to have in order to survive (Proposition \ref{prop:survival2}), independently of the number of players involved. We have also shown that, with an appropriate use of taxation mechanisms, every outcome consistent with players' goals can be turned into a surviving Nash equilibrium (Algorithm \ref{alg:algo}).
Future research efforts will be devoted to studying the interaction between side-payments in strategic games with goals and more realistic taxation mechanism that carry out imperfect redistribution of wealth, i.e., extract payoff units to some players at certain outcomes redistributing a part of it to possibly different players at possibly different outcomes. Attention will also be paid to the relation with mechanism design and the algorithmic properties of the procedures under study.

\section*{Acknowledgments}

Paolo Turrini acknowledges the support of the IEF Marie Curie fellowship "Norms in Action: Designing and Comparing Regulatory Mechanisms for Multi-Agent Systems" (FP7-PEOPLE-2012-IEF, 327424 "NINA"). 

He is indebted to  Ulle Endriss, Valentin Goranko,  Umberto Grandi, Davide Grossi, Paul Harrenstein, Wojtek Jamroga, J\'{e}r\^{o}me Lang and Michael Wooldridge, for their feedback to the paper "Endogenous Boolean Games", presented at the 23rd International Joint Conference on Artificial Intelligence (IJCAI 2013). % and of the COFUND Marie Curie fellowship "Trust Games" (FP7-PEOPLE-2011-COFUND, 1196394 "TrustGames"), cofunded by the National Research Fund of Luxembourg. %\\

\newpage
\bibliographystyle{alpha}
\bibliography{EBG-Bibliography}

\end{document}